\newtheorem{theorem}{Theorem}[section]
\newtheorem{corollary}[theorem]{Corollary}
\newcounter{claims}[theorem]
\newtheorem{claim}[claims]{Claim}
\theoremstyle{definition}
\newtheorem{definition}[theorem]{Definition}
\newtheorem*{question*}{Question}
\newcounter{cases}[theorem]
\newtheorem{case}[cases]{Case}
\theoremstyle{remark}
\newtheorem*{remark*}{Remark}
\newcommand{\fs}{2^{< \omega}}
\newcommand{\cs}{2^{ \omega}}
\newcommand{\N}{\mathbb{N}}
\newcommand{\solr}{\preceq_S}
\DeclareMathOperator{\dom}{dom}
\renewcommand\labelenumi{(\arabic{enumi})}
\renewcommand\theenumi\labelenumi
\begin{document}

\title{Some Questions of Uniformity in Algorithmic Randomness}
\author{Laurent Bienvenu, Barbara F. Csima, Matthew Harrison-Trainor}

\maketitle

\abstract{The $\Omega$ numbers---the halting probabilities of universal prefix-free machines---are known to be exactly the Martin-L\"of random left-c.e.\ reals. We show that one cannot uniformly produce, from a Martin-L\"of random left-c.e.\ real $\alpha$, a universal prefix-free machine $U$ whose halting probability is $\alpha$. We also answer a question of Barmpalias and Lewis-Pye by showing that given a left-c.e.\ real $\alpha$, one cannot uniformly produce a left-c.e.\ real~$\beta$ such that $\alpha - \beta$ is neither left-c.e.\ nor right-c.e.}

\section{Introduction}

Prefix-free Kolmogorov complexity, which is perhaps the most prominent version of Kolmogorov complexity in the study of algorithmic randomness, is defined via prefix-free machines: A  prefix-free machine is a partial computable function $M: \fs \rightarrow \fs$ ($\fs$ being the set of finite binary strings) such that no two distinct elements of $\dom(M)$ are comparable under the prefix relation. The prefix-free Kolmogorov complexity of $x \in \fs$ relative to the machine~$M$ is defined to be the quantity $K_M(x)=\min \{|p| \, : \, M(p)=x\}$. To get a machine-independent notion of Kolmogorov complexity, one needs to take an optimal prefix-free machine, that is, a prefix-free machine $U$ such that for any prefix-free machine $M$, one has $K_U \leq K_M + c_M$ for some constant~$c_M$ which depends solely on~$M$. Then one defines the prefix-free Kolmogorov complexity $K$ by setting $K=K_U$. The resulting function~$K$ only depends on the choice of~$U$ by an additive constant, because by definition, if $U$ and $V$ are optimal machines, then $|K_U-K_V|=O(1)$. To be complete, one needs to make sure optimal machines exist. One way to build one is to take a total computable function $e \mapsto \sigma_e$ from $\N$ to $\fs$ whose range is prefix-free (for example, $\sigma_e=0^e1$), and set $U(\sigma_e \tau)=M_e(\tau)$ where $(M_e)$ is an effective enumeration of all prefix-free machines. It is easy to see that $U$ is prefix-free and for all~$e$, $K_U \leq K_{M_e} + |\sigma_e|$, hence $U$ is optimal. Machines~$U$ of this type are called \emph{universal by adjunction} and they form a strict subclass of optimal prefix-free machines.\footnote{For example, given a universal prefix-free machine $U$, we can construct an optimal prefix-free machine $V$, which is not universal by adjunction, by defining, for $p \in \dom(U)$, $V(p0) = V(p1) = U(p)$ if $|p|$ odd, and $V(p) = U(p)$ if $|p|$ is even. This is well-defined because $U$ is prefix-free, and the fact that $U$ is prefix-free and optimal implies that $V$ is. $V$ is not universal by adjunction; one can see this for example by noting that every string in the domain of $V$ is of even length, but this is not true of any machine that is universal by adjunction. See, for example, \cite{CNSS,CaludeStaiger}.}

\begin{remark*}
	Often no distinction is made between optimal prefix-free machines and universal prefix-free machines. E.g., in \cite{Nies2009} it is said that optimal prefix-free machines are often called universal prefix-free machines. In this paper, the distinction will be important. An optimal prefix-free machine is a prefix-free machine~$U$ such that for every prefix-free machine $M$, there is a constant $c_M$ such that $K_U \leq K_M + c_M$. A universal prefix-free machine is one that is universal by adjunction. Thus every universal machine is optimal, but the converse is not true. Every machine in this paper will be prefix-free, and so we often omit the term `prefix-free'.
\end{remark*}

\subsection{Omega Numbers}

Given a prefix-free machine~$M$, one can consider the `halting probability' of~$M$, defined by
\[
\Omega_M=\sum_{M(\sigma)\downarrow} 2^{-|\sigma|}.
\]
The term `halting probability' is justified by the following observation: a prefix-free machine $M$ can be naturally extended to a partial functional from $\cs$, the set of infinite binary sequences, to $\fs$, where for $X \in \cs$, $M(X)$ is defined to be $M(\sigma)$ if some $\sigma \in \dom(M)$ is a prefix of $X$, and $M(X) \uparrow$ otherwise. The prefix-freeness of $M$ on finite strings ensures that this extension is well-defined. With this point of view, $\Omega_M$ is simply $\mu\{ X \in \cs : M(X) \downarrow\}$, where $\mu$ is the uniform probability measure (a.k.a.\ Lebesgue measure) on $\cs$, that is, the measure where each bit of $X$ is equal to $0$ with probability $1/2$ independently of all other bits.

For any machine $M$, the number $\Omega_M$ is fairly simple from a computability-theoretic viewpoint, namely, it is the limit of a computable non-decreasing sequence of rationals (this is easy to see, because $\Omega_M$ is the limit of $\Omega_{M_s}=\sum_{M(\sigma)[s]\downarrow} 2^{-|\sigma|}$).  We call such a real \emph{left-c.e.} It turns out that every left-c.e.\ real $\alpha \in [0,1]$ can be represented in this way, i.e., for any left-c.e.~$\alpha \in [0,1]$, there exists a prefix-free machine~$M$ such that $\alpha = \Omega_M$, as consequence of the Kraft-Chaitin theorem (see~\cite[Theorem 3.6.1]{DowneyH2010}).

One of the first major results in algorithmic randomness was Chaitin's theorem~ \cite{Chaitin75} that the halting probability $\Omega_U$ of an optimal machine~$U$ is always an algorithmically random real, in the sense of Martin-L\"of (for background on Martin-L\"of randomness, one can consult~\cite{DowneyH2010,Nies2009}). From here on we simply call a real \emph{random} if it is random in the sense of Martin-L\"of.

This is particularly interesting because this gives ``concrete'' examples of Martin-L\"of random reals, which furthermore are, as we just saw, left-c.e. Whether the converse is true, that is, whether every random left-c.e.\ real $\alpha \in [0,1]$ is equal to $\Omega_U$ for \emph{some} optimal machine~$U$ remained open for a long time. The answer turned out to be positive, a remarkable result with no less remarkable history. Shortly after the work of Chaitin, Solovay~\cite{Solovay75} introduced a preorder on left-c.e.\ reals, which we now call Solovay reducibility: for $\alpha, \beta$ left-c.e., we say that $\alpha$ is Solovay-reducible to $\beta$, which we write $\alpha \solr \beta$, if for some positive integer~$n$, $n \beta - \alpha$ is left-c.e.\footnote{In fact Solovay gave a more intuitive definition, which in substance states that computable approximations of $\beta$ from below converge more slowly than computable approximations of $\alpha$ from below, but the version we give is equivalent to Solovay's original definition and easier to manipulate.}. Solovay showed that reals of type $\Omega_U$ for optimal~$U$ are maximal with respect to the Solovay reducibility. While this did not fully settle the above question, Solovay reducibility turned out to be the pivotal notion towards its solution. Together with Solovay's result, subsequent work lead to the following theorem.

\begin{theorem}\label{thm:equiv-omega}
For $\alpha \in [0,1]$ left-c.e., the following are equivalent.
\begin{itemize}
\item[(a)] $\alpha$ is Martin-L\"of random
\item[(b)] $\alpha = \Omega_U$ for some optimal machine~$U$
\item[(c)] $\alpha$ is maximal w.r.t\ Solovay reducibility.
\end{itemize}
\end{theorem}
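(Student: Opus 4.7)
The plan is to prove the equivalences via the cycle $(b) \Rightarrow (a)$, $(b) \Rightarrow (c)$, $(c) \Rightarrow (a)$, and $(a) \Rightarrow (b)$, with the last implication being the deep step. The implication $(b) \Rightarrow (a)$ is Chaitin's theorem: knowing $\Omega_U$ to within $2^{-n}$ allows one to decide halting for all $U$-programs of length $\leq n$, and a standard counting argument then forces $K(\Omega_U \res n) \geq n - O(1)$, which gives Martin-L\"of randomness.

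For $(b) \Rightarrow (c)$, given optimal $U$ and left-c.e.\ $\beta$, I would invoke the Kraft-Chaitin theorem to produce a prefix-free machine $M$ with $\Omega_M = \beta$. Applying optimality to the auxiliary machine that maps $p \mapsto p$ for $p \in \dom(M)$ yields a constant $c$ and a computable injection $f \colon \dom(M) \hookrightarrow \dom(U)$ with $|f(p)| \leq |p| + c$ and $U(f(p)) = p$. Enumerating $\beta$ via the rule ``add $2^{-|p|}$ to the approximation only once $f(p)$ has appeared in $\dom(U)$'' produces in parallel a non-decreasing computable approximation to $2^c \Omega_U - \beta$, so this difference is left-c.e.\ and $\beta \solr \Omega_U$.

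For $(c) \Rightarrow (a)$, fix any optimal $V$; by $(b) \Rightarrow (a)$, $\Omega_V$ is random. Maximality of $\alpha$ gives $\Omega_V \solr \alpha$, so $n\alpha = \Omega_V + \gamma$ for some positive integer $n$ and left-c.e.\ $\gamma$. A standard lemma says that a random real plus a left-c.e.\ real remains random: any Martin-L\"of test covering $n\alpha$ could be pulled back through the enumeration of $\gamma$ into a test covering $\Omega_V$, contradicting its randomness. Hence $n\alpha$, and therefore $\alpha$, is random.

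The main obstacle is $(a) \Rightarrow (b)$, due to Ku\v{c}era and Slaman. Given random left-c.e.\ $\alpha$, the goal is to build an optimal prefix-free machine $U$ with $\Omega_U = \alpha$ on the nose. I would fix an auxiliary universal-by-adjunction machine $V$ whose $\Omega_V$ is small compared to $\alpha$, so that a witness to $\Omega_V \solr \alpha$ is known with a concrete multiplicative constant (available because $\alpha$ is random left-c.e.), and then run a Kraft-Chaitin construction that (i) honors every request $V$ issues, so that $U$ inherits optimality from $V$ up to an additive constant, and (ii) inserts padding requests whose timing is driven by a left-approximation of $\alpha$ in such a way that the running weight of $U$ tracks $\alpha$ exactly. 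The crux is that whenever the $V$-simulation demands new weight $\varepsilon$, the approximation to $\alpha$ must have already grown by at least $\varepsilon$; this is precisely what $\Omega_V \solr \alpha$ guarantees. Orchestrating this matching while preserving prefix-freeness is the technical heart of the theorem.
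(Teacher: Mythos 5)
Your decomposition closes the cycle correctly and, modulo repackaging, follows the same route the paper itself merely cites (Chaitin for (b)$\Rightarrow$(a), Solovay for (b)$\Rightarrow$(c), Calude--Hertling--Khoussainov--Wang for (c)$\Rightarrow$(b), Ku\v{c}era--Slaman for (a)$\Rightarrow$(c)). Your (a)$\Rightarrow$(b) is really the Ku\v{c}era--Slaman theorem---hidden entirely in the parenthetical ``available because $\alpha$ is random left-c.e.'', which is where all the depth of the theorem lives---followed by the CHKW-style machine construction; and your (c)$\Rightarrow$(a) replaces the paper's explicit (c)$\Rightarrow$(b) arrow by the upward closure of randomness under Solovay reducibility, which is a correct and standard alternative.

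The one step with a genuine gap is (b)$\Rightarrow$(c). The rule ``add $2^{-|p|}$ to the approximation of $\beta$ only once $f(p)$ has appeared in $\dom(U)$'' does not produce a non-decreasing approximation to $2^c\Omega_U-\beta$: the machine $U$ may issue a short description of $p$ \emph{before} $p$ is enumerated into $\dom(M)$ (or before your search for $f(p)$ terminates), and at the later stage where $f(p)$ is finally identified you must credit $2^{-|p|}$ to $\beta$ with no simultaneous increase of $\Omega_{U,s}$, so the running difference can strictly decrease. The standard repair is to replace the identity machine by a Kraft--Chaitin machine that, each time $\beta_{s+1}-\beta_s>0$, requests descriptions of \emph{fresh} strings $x$, chosen so that $U$ has not yet, by stage $s$, issued any description of $x$ of length at most the request length plus $c$; optimality then forces $U$ to add weight at least $2^{-c}(\beta_{s+1}-\beta_s)$ strictly \emph{after} stage $s$, giving $\beta-\beta_s\leq 2^{c}\bigl(\Omega_U-\Omega_{U,t(s)}\bigr)$ for a computable time function $t$, which is exactly $\beta\solr\Omega_U$.
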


The implication $(b) \Rightarrow (a)$ is Chaitin's result and the implication $(b) \Rightarrow (c)$ is Solovay's, as discussed above. Calude, Hertling, Khoussainov, and Wang~\cite{CaludeHertlingKhoussainovWang} showed $(c) \Rightarrow (b)$, and the last crucial step $(a) \Rightarrow (c)$ was made by Ku\v cera and Slaman~\cite{KuceraS2001}. We refer the reader to the survey~\cite{BienvenuS2012} for an exposition of this result.\\

Summing up what we know so far, we have for any real $\alpha \in [0,1]$:
\begin{eqnarray*}
\alpha~\text{is left-c.e.} & \Leftrightarrow & \alpha=\Omega_M~\text{for some machine}~M\\
\alpha~\text{is left-c.e.\ and random} & \Leftrightarrow & \alpha=\Omega_U~\text{for some optimal machine}~U
\end{eqnarray*}

The first equivalence is uniform: Given a prefix-free machine $M$ (represented by its index in an effective enumeration of all prefix-free machines), we can pass in a uniform way to a left-c.e.\ index for $\Omega_M$; and moreover, given a left-c.e.\ index for a left-c.e.\ real $\alpha \in [0,1]$, we can pass uniformly to an index for a prefix-free machine $M$ with $\Omega_M = \alpha$ (a consequence of the so-called Kraft-Chaitin theorem, see~\cite[Theorem 3.6.1]{DowneyH2010}). By a left-c.e.\ index, we mean an index for a non-decreasing sequence of rationals

It was previously open however (see for example~\cite[p.11]{Barmpalias}) whether the second equivalence was uniform, that is: given an index for a random left-c.e.\ $\alpha \in [0,1]$, can we uniformly obtain an index for an \emph{optimal} machine~$U$ such that $\alpha=\Omega_U$? Our first main result is a negative answer to this question.

\begin{theorem}\label{no-uniform-machine-constr}
	There is no partial computable function $f$ such that if $e$ is an index for a Martin-L\"of random left-c.e.\ real $\alpha \in [0,1]$, then  the value of $f(e)$ is defined and is an index for an optimal Turing machine $M_{f(e)}$ with halting probability $\alpha$.
\end{theorem}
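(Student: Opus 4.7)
The plan is to assume for contradiction that such an $f$ exists and build a self-referential counterexample using the recursion theorem. Fix an optimal prefix-free machine $U$ whose halting probability $\Omega_U$ is less than $1/2$ (e.g.\ by prepending programs with a fixed two-bit prefix). From an index $e$ I effectively produce an index $h(e)$ for a left-c.e.\ real $\alpha_e$ defined in two phases: while $f(e)[s]$ is still undefined, enumerate $\Omega_U[s]/2$; once $f(e)$ converges to some $m$ at stage $s_0$, from $s_0$ onward enumerate $\Omega_{M_m}[s]/2 + 1/4$. Since $\Omega_U[s]/2 \le 1/4$ and the Phase~2 target is always at least $1/4$, the sequence is non-decreasing and bounded by $3/4$, so $h$ is a total computable map on indices. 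The recursion theorem then yields a fixed point $e^*$ with $\alpha_{e^*} = \alpha_{h(e^*)}$.

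Next I would run the case analysis at $e^*$. If $f(e^*) \uparrow$, then $\alpha_{e^*} = \Omega_U/2$ is a Martin-L\"of random left-c.e.\ real in $[0,1]$, which by the hypothesis on $f$ forces $f(e^*) \downarrow$, a contradiction. So $f(e^*) \downarrow = m$ and $\alpha_{e^*} = \Omega_{M_m}/2 + 1/4$. If in addition $\alpha_{e^*}$ is random (equivalently, if $M_m$ is optimal), then the hypothesis applies and forces $\Omega_{M_m} = \alpha_{e^*} = \Omega_{M_m}/2 + 1/4$, hence $\Omega_{M_m} = 1/2$; but $1/2$ is rational, contradicting Chaitin's theorem that the halting probability of an optimal machine must be Martin-L\"of random.

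The main obstacle is the remaining case in which $f(e^*) \downarrow = m$ and $M_m$ is not optimal; then $\alpha_{e^*} = \Omega_{M_m}/2 + 1/4$ need not be random, $e^*$ need not be an index of a random left-c.e.\ real, and the hypothesis on $f$ becomes vacuous at $e^*$, so the contradiction evaporates. To close this case I would need to strengthen the construction so that $\alpha_{e^*}$ is forced to be random regardless of what $f(e^*)$ outputs. The naive fix of adding an unconditional random summand like a scaled copy of $\Omega_V$ secures randomness but also enters the hypothesis-forced equation, permitting a consistent random solution for $\Omega_{M_m}$ and thus neutralizing the contradiction. I expect the actual argument to resolve this tension by exploiting an additional feature of optimal machines beyond mere randomness of $\Omega_M$---for instance by coupling the construction with the uniformly Kraft-Chaitin-produced machine arising from the index $e$ itself alongside $M_m$ to obtain a second independent constraint, or by a more careful asymmetry between the two phases that keeps the self-referential equation rigid while making $\alpha_{e^*}$ unconditionally random.
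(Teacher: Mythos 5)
Your sketch correctly sets up the recursion-theorem framework and, to your credit, you have put your finger on exactly the obstruction that makes this theorem nontrivial: in the case where $f(e^*)$ converges to an index $m$ of a non-optimal machine, $\alpha_{e^*}$ need not be random, the hypothesis on $f$ becomes vacuous, and the contradiction disappears. You also correctly observe that patching this by adding an unconditional random summand just turns the self-referential equation into one that a well-behaved $f$ could consistently satisfy. But this means the proposal, as it stands, is not a proof: the entire content of the theorem lives in the case you cannot close. Any construction that tries to win purely by forcing an impossible \emph{numerical identity} on $\Omega_{M_{f(e)}}$ runs into the same fixed-point problem, so the missing idea is not a small repair but a different diagonalization target.

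The paper's proof resolves the tension in the direction you guessed at the end, by attacking \emph{optimality} rather than the value of the halting probability. One builds $\alpha$ stage by stage so that it always has the form $q\beta + l$ for a fixed random left-c.e.\ $\beta$ and rationals $q,l$ that change only finitely often; this makes $\alpha$ random \emph{unconditionally}, no matter what $M = M_{f(e)}$ does. Simultaneously one builds an auxiliary prefix-free machine $Q$ and requirements $R_d$ asserting that some $\sigma$ has $K_M(\sigma) > K_Q(\sigma) + d$. The key mechanism is that if $M$ is to have $\mu(\dom(M)) = \alpha$, then the measure enumerated by $M$ must track the approximation $\alpha_s$ from below; by reserving a short $Q$-code $\tau$ for a string $\sigma$ currently incompressible by $M$, waiting until $M$'s measure is within $2^{-(|\tau|+d)}$ of $\alpha_s$, and thereafter letting $\alpha_s$ grow only in increments smaller than $2^{-(|\tau|+d)}$, one guarantees $M$ can never afford to issue a description of $\sigma$ of length $\le |\tau|+d$. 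This is organized as a finite-injury priority argument (with states \textbf{preparing}, \textbf{waiting}, \textbf{restraining}), and the outcome is a trichotomy: either $M$'s measure overshoots $\alpha$ at some stage (immediate win), or $M$'s measure falls short of $\alpha$ (win), or all requirements $R_d$ are met and $M$ is not optimal. Your proposal contains none of this machinery, so the gap is genuine.
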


\noindent Thus one cannot uniformly view a Martin-L\"of random left-c.e.\ real as an $\Omega$ number. \\

%The key technique in the proof is to slowly increase $\alpha$ by small amounts many times, each time waiting for $M_{f(e)}$ to increase its domain to match before proceeding, in order to obtain a large increase in $\alpha$ without allowing $M_{f(e)}$ to converge on any short strings.

On the other hand, we show that given a left-c.e.\ random $\alpha \in [0,1]$, one can uniformly find a universal left-c.e.\ semi-measure $m$ with $\sum_i m(i) = \alpha$. An interesting corollary is that one cannot uniformly turn a universal left-c.e.\ semi-measure~$m$ into a universal machine whose halting probability is $\sum_i m(i)$.

\subsection{Differences of left-c.e.\ reals}

The set of left-c.e.\ reals is closed under addition and multiplication, not under subtraction or inverse. However, the set $\{\alpha - \beta \mid \alpha, \beta\ \text{left-c.e.}\}$, of differences of two left-c.e.\ reals is algebraically much better behaved, namely it is a real closed field~\cite{AWZ,Raichev,ng}. Barmpalias and Lewis-Pye proved the following theorem.
%The study of left-c.e.\ reals has a long history beginning with Solovay \cite{Solovay75} who introduced a degree structure---now called the Solovay degrees---that measures the difficulty of approximating a left-c.e.\ real by an increasing sequence of rationals. Calude, Hertling, Khoussainov, and Wang \cite{CaludeHertlingKhoussainovWang} and Ku\v cera and Slaman \cite{KuceraS2001} showed that the left-c.e.\ reals of maximal Solovay degree are exactly the $\Omega$ numbers. See also \cite{DowneyHirschfeldtNies}. The differences of left-c.e.\ reals, which form a field under the usual addition and multiplication, have also been heavily studied in \cite{AWZ,DowneyWuZheng,Raichev,ng}. Barmpalias and Lewis-Pye proved the following theorem on differences of left-c.e.\ reals.

\begin{theorem}[Barmpalias and Lewis-Pye \cite{BarmpaliasLewis}]\label{thm:bl1}
	If $\alpha$ is a non-computable left-c.e.\ real there exists a left-c.e.\ real $\beta$ such that $\alpha - \beta$ is neither left-c.e.\ nor right-c.e.
\end{theorem}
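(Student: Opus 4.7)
The plan is to construct $\beta$ as a left-c.e.\ real by a finite-injury priority argument. Let $(\gamma_e)_{e\in\N}$ be an effective enumeration of all left-c.e.\ reals in $[0,1]$ with uniform computable approximations $\gamma_{e,s}\nearrow\gamma_e$, and fix a computable approximation $\alpha_s\nearrow\alpha$. We build $\beta$ via a nondecreasing rational sequence $\beta_s\nearrow\beta$, imposing for each $e$ the two requirements
\[
P_e:\ \beta\neq\alpha-\gamma_e,\qquad N_e:\ \beta\neq\alpha+\gamma_e.
\]
Collectively the $P_e$'s force $\alpha-\beta$ to be not left-c.e., and the $N_e$'s force $\beta-\alpha$ to be not left-c.e., i.e., $\alpha-\beta$ is not right-c.e.

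The basic strategy for $P_e$ maintains $\beta_s\geq\alpha_s-\gamma_{e,s}+2^{-n_e}$ at every stage (for a priority-dependent tolerance $2^{-n_e}$); taking the supremum over $s$ gives $\beta\geq\alpha-\gamma_e+2^{-n_e}$, so $\beta\neq\alpha-\gamma_e$. For $N_e$ we pick one of two sub-strategies based on a guess in a priority tree: either \emph{up} (maintain $\beta_s\geq\alpha_s+\gamma_{e,s}+2^{-n_e}$, forcing $\beta>\alpha+\gamma_e$ in the limit) or \emph{down} (cap $\beta_s\leq\alpha_s+\gamma_{e,s}-2^{-n_e}$, forcing $\beta<\alpha+\gamma_e$). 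The total enumeration from each strategy is bounded (by $\alpha+2^{-n_e}$ for $P_e$, by $\alpha+\gamma_e+2^{-n_e}$ for $N_e$-up, and trivially for $N_e$-down), so the strategies can coexist provided the guesses are consistent.

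The conflicts between sub-strategies for distinct $N_e$'s are genuine: if $N_e$ picks \emph{up} while $N_{e'}$ picks \emph{down}, we need $\gamma_{e'}-\gamma_e\geq 2^{-n_e}+2^{-n_{e'}}$ to fit both bounds. We resolve conflicts on a priority tree $T=2^{<\omega}$ whose nodes encode up/down guesses for successive $N_e$'s. Along the true path, these guesses are consistent, each strategy is injured only finitely often (every time a higher-priority node's guess is refuted, $\gamma_{e,s}$ or $\alpha_s$ has crossed a fixed threshold, which happens boundedly often), and hence $\beta_s$ converges to a finite value.

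The main obstacle is confirming that the final $\beta-\alpha$ is not left-c.e.\ (which would collide with some $N_{e_0}$), and here the non-computability of $\alpha$ is essential. If $\alpha$ were computable, $\beta-\alpha$ would automatically be left-c.e.\ (since $\beta$ is) and hence coincide with some $\gamma_{e_0}$, so $N_{e_0}$ would necessarily fail---confirming that no construction can work for computable $\alpha$. Because $\alpha$'s convergence modulus dominates every computable function, the up/down oscillations prescribed by the tree push $\beta-\alpha$ off every single left-c.e.\ target $\gamma_e$, producing a genuinely d.c.e.\ value $\beta-\alpha$ that is neither left- nor right-c.e.
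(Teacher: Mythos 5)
There is a genuine gap, and it sits exactly where you yourself locate ``the main obstacle.'' Observe first that everything you describe --- the enumeration $(\gamma_e)$, the approximations, the tree of up/down guesses, the restraints --- is effective in a left-c.e.\ index for $\alpha$; the true path enters only in the verification. So if your construction worked as stated, it would yield a uniform procedure taking an index for a non-computable left-c.e.\ real $\alpha$ to an index for a suitable $\beta$, and Theorem~\ref{thm:no-unif-diff} of this very paper shows that no such procedure exists. The theorem you are proving is genuinely non-uniform: the Barmpalias--Lewis-Pye proof (which this paper cites rather than reproves) splits into the case where $\alpha$ is Martin-L\"of random --- there one takes $\beta=\alpha+\gamma-\delta$ for Solovay-incomparable left-c.e.\ $\gamma,\delta$, using the randomness of $\alpha$ (via Solovay completeness) to see that $\beta$ is left-c.e.\ --- and a separate argument for non-random $\alpha$; these two algorithms cannot be merged into one.

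The concrete failure point is the up/down decision for the $N_e$'s. You never specify what observable event confirms or refutes a guess, and no criterion of the usual kind can: whether the \emph{down} cap $\beta_s\le\alpha_s+\gamma_{e,s}-2^{-n_e}$ is compatible with the lower bounds imposed by the other strategies depends on inequalities such as $\gamma_e+\gamma_{e'}\ge 2^{-n_e}+2^{-n_{e'}}$ between the \emph{limit} reals, which is not an event the construction can wait for and whose failure it cannot detect either. Nor can you default to \emph{up} everywhere: to make $\alpha-\beta$ non-right-c.e.\ you need $\beta\ne\alpha+\rho$ for \emph{every} left-c.e.\ $\rho$, including every rational, so the targets are dense and some requirements must be met downward. (Relatedly, restricting the $\gamma_e$ to $[0,1]$ is not enough: avoiding all left-c.e.\ reals in $[0,1]$ does not make $\alpha-\beta$ non-left-c.e.) Finally, the closing appeal to ``$\alpha$'s convergence modulus dominates every computable function'' is not a property non-computable left-c.e.\ reals have (one only gets infinitely-often escape from domination), and in any case the verification never actually uses the non-computability of $\alpha$; it is invoked only in the last sentence, which is an assertion of the conclusion rather than an argument for it.
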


\noindent The proof is non-uniform, and considers two separate cases depending on whether or not $\alpha$ is Martin-L\"of random (though it is uniform in each of these cases). Barmpalias and Lewis-Pye ask whether there is a uniform construction; we show that the answer is negative.

\begin{theorem}\label{thm:no-unif-diff}
	There is no partial computable function $f$ such that if $e$ is an index for a non-computable left-c.e.\ real $\alpha$, then $f(e)$ is defined and is an index for a left-c.e.\ real $\beta$ such that $\alpha - \beta$ is neither left-c.e.\ nor right-c.e.
\end{theorem}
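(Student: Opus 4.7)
The plan is to suppose such an $f$ exists and diagonalize via the recursion theorem. Fix once and for all a non-computable left-c.e.\ real $\gamma$ (for instance $\gamma=\Omega$); since any real that is simultaneously left-c.e.\ and right-c.e.\ is computable, $\gamma$ is automatically not right-c.e. The idea is to manufacture an index $e_0$ whose associated left-c.e.\ real $\alpha_{e_0}$ satisfies $\alpha_{e_0}=\gamma+\alpha_{f(e_0)}$, so that $\alpha_{e_0}-\alpha_{f(e_0)}=\gamma$ is left-c.e.\ in violation of the promised property of $f$---provided we can guarantee that $\alpha_{e_0}$ is itself non-computable, so that the hypothesis on $f$ does apply to $e_0$ in the first place.

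To produce such an $e_0$ I would define a total computable function $g\colon\N\to\N$ so that $\alpha_{g(e)}$ is enumerated as follows: at every stage $s$ the enumeration emits the next non-negative increment from a fixed enumeration of $\gamma$, and starting from the first stage at which $f(e)$ has converged, say $f(e)=k$, it also emits the next non-negative increment from the fixed enumeration of $\alpha_k$. Thus $\alpha_{g(e)}=\gamma$ if $f(e)\uparrow$, and $\alpha_{g(e)}=\gamma+\alpha_{f(e)}$ if $f(e)\downarrow$. By the recursion theorem there is an index $e_0$ with $\alpha_{e_0}=\alpha_{g(e_0)}$.

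Finally I would split into cases on whether $f(e_0)$ halts. If $f(e_0)\uparrow$, then $\alpha_{e_0}=\gamma$ is a non-computable left-c.e.\ real, so by the hypothesis on $f$ the value $f(e_0)$ must be defined, a contradiction. If $f(e_0)\downarrow=k$, then $\alpha_{e_0}-\alpha_k=\gamma$ is left-c.e., and it remains to verify that $\alpha_{e_0}$ is non-computable. Here the choice of $\gamma$ pays off: were $\alpha_{e_0}$ computable, then $\gamma=\alpha_{e_0}-\alpha_k$ would be the difference of a computable real and a left-c.e.\ real, hence right-c.e., contradicting the choice of $\gamma$. Thus $\alpha_{e_0}$ is a non-computable left-c.e.\ real, so the hypothesis on $f$ demands that $\alpha_{e_0}-\alpha_{f(e_0)}$ be neither left-c.e.\ nor right-c.e.---but it equals $\gamma$, which is left-c.e. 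The one delicate step is exactly this last non-computability check, which rules out the naive attempt of taking $\alpha_{e_0}=\alpha_{f(e_0)}$ and forces the introduction of the non-right-c.e.\ ``offset'' $\gamma$.
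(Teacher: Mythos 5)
Your reduction is appealingly short, but it has a genuine gap: you never establish that $\alpha_{e_0}$ converges, and the argument as written cannot. By design, $\alpha_{e_0}$ is $\gamma$ plus the running total of the increments emitted by the machine with index $k=f(e_0)$. That machine belongs to the opponent and enumerates its rationals while watching the enumeration of $\alpha_{e_0}$ itself; the hypothesis on $f$ only guarantees that $\alpha_k$ is a genuine (finite) left-c.e.\ real when $e_0$ is an index for a genuine non-computable left-c.e.\ real. So you are caught in a circle: $\alpha_{e_0}$ is a left-c.e.\ real only if $\alpha_k$ is bounded, and $\alpha_k$ is guaranteed to be bounded only if $\alpha_{e_0}$ is a left-c.e.\ real. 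A divergent fixed point ($\alpha_{e_0,s}\to\infty$, hence $e_0$ not an index for any real, hence no obligation whatsoever on $f(e_0)$) is perfectly self-consistent and yields no contradiction. To see that this is not a formality, note that replacing a putative valid $f$ by one whose machine emits an extra $2\lfloor \alpha_{e,s}\rfloor$ at stage $s$ preserves validity (on a genuine index this only shifts $\beta$ by an eventually constant integer), yet through your identity $\alpha_{e_0,s}\approx\gamma_s+\beta_{s'}$ it can drive the fixed point past every bound. Capping does not rescue you: with $\alpha_{e_0}=\gamma+\min(\beta,N)$ the case $\beta>N$ gives $\alpha_{e_0}-\beta=\gamma+N-\beta$, a d.c.e.\ real that need not be left-c.e.\ or right-c.e., and $N$ cannot be fixed in advance since the opponent can read it off from the code $e_0$.

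This is precisely the difficulty the paper's proof is organized around, which is why its construction is more elaborate than a one-line fixed point. There, $\alpha$ copies $\beta$'s increments only during designated \textbf{follow}$(i)$ phases, each contributing at most $2^{1-i}$ to $\alpha$ no matter what $\beta$ does, so $\alpha$ converges unconditionally; and since $\alpha$ therefore cannot afford to track $\beta$ forever, a right-c.e.\ fallback $\delta$ is maintained so that if there are infinitely many \textbf{wait} stages one gets $\alpha-\beta=\delta$ right-c.e.\ rather than left-c.e. Your single-outcome plan (make $\alpha-\beta$ left-c.e., full stop) collapses exactly because forcing $\alpha-\beta$ to be left-c.e.\ requires $\alpha$ to absorb all of $\beta$'s growth, which you cannot control at the fixed point. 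The step you flag as delicate---non-computability of $\alpha_{e_0}$---is actually fine (a computable real minus a left-c.e.\ real is right-c.e., and $\gamma$ is not right-c.e.); the missing piece is convergence.
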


\noindent Barmpalias and Lewis-Pye note that it follows from \cite[Theorem 3.5]{DowneyHirschfeldtNies} that if $\alpha$ and $\beta$ are left-c.e.\ reals and $\alpha$ is Martin-L\"of random while $\beta$ is not, then $\alpha - \beta$ is a Martin-L\"of random left-c.e.\ real. In particular, if $\alpha$ in Theorem~\ref{thm:bl1} is Martin-L\"of random, then the corresponding $\beta$ must be Martin-L\"of random as well. Thus $\alpha$ and $\beta$ are the halting probabilities of universal machines.

\begin{theorem}[Barmpalias and Lewis-Pye \cite{BarmpaliasLewis}]\label{thm:bl2}
	For every universal machine~$U$, there is a universal machine~$V$ such that $\Omega_U-\Omega_V$ is neither left-c.e.\ nor right-c.e.
\end{theorem}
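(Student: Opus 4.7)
The plan is to reduce Theorem~\ref{thm:bl2} to Theorem~\ref{thm:bl1}, together with the fact---used implicitly in the excerpt---that every random left-c.e.\ real in $(0,1)$ is the halting probability of some universal-by-adjunction machine.

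First I would apply Theorem~\ref{thm:bl1} to $\alpha = \Omega_U$. Since $U$ is universal, $\Omega_U$ is Martin-L\"of random by Chaitin's theorem, hence non-computable, so Theorem~\ref{thm:bl1} produces a left-c.e.\ real $\beta$ with $\Omega_U - \beta$ neither left-c.e.\ nor right-c.e. Next I would observe that $\beta$ must itself be Martin-L\"of random: otherwise the Downey-Hirschfeldt-Nies fact recalled in the excerpt would force $\Omega_U - \beta$ to be a random left-c.e.\ real (in particular left-c.e.), contradicting the choice of $\beta$. A trivial shift by a rational ensures that we may take $\beta \in (0,1)$, without disturbing any of the above properties (adding a rational preserves both left-c.e.-ness of $\beta$ and the failure of left-c.e./right-c.e.\ of $\Omega_U - \beta$).

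Since $\beta$ is a random left-c.e.\ real in $(0,1)$, I would then invoke the strengthening of $(a)\Leftrightarrow(b)$ in Theorem~\ref{thm:equiv-omega} asserting that such a $\beta$ equals $\Omega_V$ for some universal-by-adjunction $V$. With this $V$, we have $\Omega_U - \Omega_V = \Omega_U - \beta$, which is neither left-c.e.\ nor right-c.e., completing the proof.

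The main obstacle is this last lemma, because the Calude-Hertling-Khoussainov-Wang proof of $(c) \Rightarrow (b)$ in Theorem~\ref{thm:equiv-omega} only produces an \emph{optimal} machine, whereas the paper's conventions require $V$ to be universal in the stricter by-adjunction sense. To bridge this gap, I would fix a default universal-by-adjunction $V_0$, use the Solovay-equivalence of $\beta$ and $\Omega_{V_0}$ (via a witness $c > 0$ to $\Omega_{V_0} \solr \beta$, so that $c\beta - \Omega_{V_0}$ is left-c.e.) to construct via Kraft-Chaitin a prefix-free machine $W$ realising a suitable left-c.e.\ residual, and splice $W$ together with an appropriate number of copies of $V_0$ via disjoint prefix-free adjuncts, yielding a universal-by-adjunction $V$ whose halting probability is a prescribed linear combination of $\Omega_{V_0}$ and $\Omega_W$ equal to $\beta$ on the nose.
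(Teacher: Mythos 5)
Your proposal is correct and follows essentially the same route the paper takes for this cited theorem: apply Theorem~\ref{thm:bl1} to $\Omega_U$, use the Downey--Hirschfeldt--Nies observation to conclude that $\beta$ is random, and then realize $\beta$ (after a rational shift into $(0,1)$) as the halting probability of a universal machine, with your splicing of a Kraft--Chaitin machine onto a scaled copy of a fixed universal-by-adjunction machine being the standard way to get universality by adjunction rather than mere optimality. Note only that this argument is inherently non-uniform---the final step of turning $\beta$ into a machine cannot be made uniform by Theorem~\ref{no-uniform-machine-constr}---which is exactly why the paper's uniform version (Theorem~\ref{thm:unif-diff-mach}) requires the separate direct construction given in Section~3.
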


\noindent Recall that the construction for Theorem \ref{thm:bl1} was uniform in the Martin-L\"of random case. So it is not too surprising that Theorem \ref{thm:bl2} is uniform; but because we cannot pass uniformly from an arbitrary Martin-L\"of random left-c.e.\ real to a universal machine (Theorem \ref{no-uniform-machine-constr}), this requires a new proof.

\begin{theorem}\label{thm:unif-diff-mach}
	Theorem~\ref{thm:bl2} is uniform in the sense that there is a total computable function~$f$ such that if $U=M_e$ is an optimal (respectively universal by adjunction) machine, then $V=M_{f(e)}$ is optimal (respectively universal by adjunction) and $\Omega_{U}-\Omega_{V}$ is neither left-c.e.\ nor right-c.e.
\end{theorem}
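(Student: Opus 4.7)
The plan is to construct $V$ uniformly from $U = M_e$ by letting $V$ copy $U$ under prefix $0$ and reserving the $1$-subtree for an auxiliary prefix-free machine $N$ whose halting probability $\Omega_N$ is chosen, via the Martin-L\"of random case of Theorem~\ref{thm:bl1}, so that $\Omega_U - \Omega_V$ is neither left-c.e.\ nor right-c.e. Concretely, for each $\sigma \in \dom(U)$ I would set $V(0\sigma) = U(\sigma)$, and for each $\tau \in \dom(N)$ I would set $V(1\tau) = N(\tau)$. Prefix-freeness of $V$ is immediate, and since $K_V \le K_U + 1$ via the $0$-prefix copy of $U$, the machine $V$ is optimal whenever $U$ is. Likewise, if $U$ is universal by adjunction with adjunction family $(\sigma_i)_i$, then $V$ is universal by adjunction with adjunction family $(0\sigma_i)_i$. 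Writing $\alpha = \Omega_U$ and $\beta = \Omega_N$, one computes
\[
\Omega_V \;=\; \tfrac{1}{2}\alpha + \tfrac{1}{2}\beta, \qquad \text{so} \qquad \Omega_U - \Omega_V \;=\; \tfrac{1}{2}(\alpha - \beta).
\]

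To choose $\Omega_N$, I would first extract uniformly in $e$ a left-c.e.\ index for $\alpha = \Omega_U$ by enumerating $\dom(U)$. The excerpt notes that the Barmpalias--Lewis-Pye proof of Theorem~\ref{thm:bl1} is uniform in the Martin-L\"of random case, so from a left-c.e.\ index for $\alpha$ we obtain, uniformly in $e$, a left-c.e.\ index for some $\beta \in [0,1]$ such that $\alpha - \beta$ is neither left-c.e.\ nor right-c.e., provided $\alpha$ is Martin-L\"of random (equivalently, provided $U$ is optimal, by Theorem~\ref{thm:equiv-omega}). An application of Kraft--Chaitin then produces, uniformly in the index for $\beta$, a prefix-free machine $N$ with $\Omega_N = \beta$. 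Plugging $N$ into the construction yields $\Omega_U - \Omega_V = (\alpha - \beta)/2$, and since multiplication by the positive rational $1/2$ preserves the properties of being left-c.e.\ and of being right-c.e., this difference inherits from $\alpha - \beta$ the property of being neither. The function $f$ is total because the construction is well-defined for every $e$; the additional conclusions are only required when $U$ is optimal.

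The main obstacle I anticipate is confirming that the Barmpalias--Lewis-Pye random-case construction can be arranged to produce a $\beta \in [0,1]$, so that Kraft--Chaitin applies. I expect a direct inspection of their argument to show that $\beta$ can be kept bounded by $\alpha \le 1$ throughout (which is natural, since the construction aims to make $\alpha - \beta$ a nontrivial bounded quantity). Should any subtlety arise, a fallback is to fold the diagonalization directly into the Kraft--Chaitin construction of $V$ inside the $1$-subtree, whose weight budget of $1/2$ is ample to host a rescaled version of the BL diagonalization against each potential left-c.e.\ or right-c.e.\ index for $\Omega_U - \Omega_V$, again exploiting the Martin-L\"of randomness of $\Omega_U$.
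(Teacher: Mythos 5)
Your machine-level architecture is sound, and in fact the halving trick $V(0\sigma)=U(\sigma)$, $V(1\tau)=N(\tau)$, giving $\Omega_U-\Omega_V=\tfrac12(\Omega_U-\Omega_N)$, is exactly how the paper passes from the optimal case to the universal-by-adjunction case in its corollary; you are also right that this sidesteps the obstruction named in the introduction, since $N$ need not itself be optimal and Kraft--Chaitin suffices. The problem is that you have outsourced the entire analytic core of the theorem to a black box: the claim that, \emph{uniformly in a left-c.e.\ index for a random $\alpha$}, one can produce a left-c.e.\ index for a $\beta$ with $\alpha-\beta$ neither left-c.e.\ nor right-c.e. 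The Barmpalias--Lewis-Pye recipe in the random case is $\beta=\alpha+\gamma-\delta$ for fixed Solovay-incomparable $\gamma,\delta$, and its left-c.e.-ness rests on the Downey--Hirschfeldt--Nies fact that $\alpha-\delta$ is left-c.e.\ when $\alpha$ is random and $\delta$ is not. That fact is proved by showing $(\alpha-\alpha_s)-(\delta-\delta_s)>0$ for all $s$ beyond some stage $N$, so that $\max_{N\le t\le s}(\alpha_t-\delta_t)$ is a monotone approximation to $\alpha-\delta$; but $N$ is not computable from the index of $\alpha$ (early terms $\alpha_t-\delta_t$ can overshoot $\alpha-\delta$), so this does not hand you a left-c.e.\ \emph{index}. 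This is precisely the kind of index-level uniformity failure the paper is about, and your proposal neither supplies the missing argument nor identifies it as the step needing proof; the one difficulty you do flag (keeping $\beta$ in $[0,1]$) is minor by comparison and fixable by scaling $\gamma,\delta$.

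The paper's proof avoids the real $\alpha$ altogether and extracts the needed uniformity from the machine $U$ itself. It writes $\delta=\sum_n 2^{-h(n)}$ for a computable (Solovay) function $h$, and builds $V$ by copying $U$'s descriptions except that, for each $\tau$ with $K_U(\tau)<h(\tau)$, one description of $\tau$ is replaced by a prefix-free family of total measure smaller by exactly $2^{-h(\tau)}$ (and a copy of a machine of measure $\gamma$ is grafted onto the first description). Optimality of $U$ gives $K_U=K+O(1)$, and since $\delta$ is not random, $h(n)-K(n)\to\infty$, so this subtraction is triggered for cofinitely many $\tau$; hence $\Omega_U-\Omega_V=2^{-c-1}-2^{-c-1}\gamma+\delta-q$ for rationals $c,q$ that need never be computed, and Solovay incomparability of $\gamma$ and $\delta$ finishes the argument. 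In other words, the machine tells you \emph{where} measure is being added, which is what lets $\delta$ be subtracted in a manifestly left-c.e.\ way; the bare real $\alpha$ does not. To rescue your approach you would need to either prove the index-level uniformity of the DHN subtraction (or of whatever construction Barmpalias and Lewis-Pye actually use in the random case), or carry out your ``fallback'' of folding a diagonalization into the $1$-subtree --- which, once made precise, is essentially the paper's construction.
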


\section{Omega Numbers}

\subsection{No uniform construction of universal machines}

We prove Theorem \ref{no-uniform-machine-constr}:

{
\renewcommand{\thetheorem}{\ref{no-uniform-machine-constr}}
\begin{theorem}
	There is no partial computable function $f$ such that if $e$ is an index for a random left-c.e.\ real $\alpha \in [0,1]$, then $f(e)$ is defined and is an index for an optimal prefix-free machine $M_{f(e)}$ with halting probability $\alpha$.
\end{theorem}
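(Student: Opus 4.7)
Approach. Suppose such $f$ exists; we derive a contradiction by a recursion-theorem diagonalization. By the recursion theorem, fix in advance an index $e$ that will serve as the left-c.e.\ index of the enumeration $(\alpha_s)$ we are about to construct (the construction is allowed to depend computably on $e$ and on $M_{f(e)}$). Let $M=M_{f(e)}$; if we manage to arrange for the limit $\alpha$ to be a random left-c.e.\ real, the hypothesis on $f$ forces $f(e)\downarrow$ and $\Omega_M=\alpha$, and the entire goal is to engineer the enumeration so that this last equation is impossible.

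Primary construction. Fix a universal prefix-free machine $V$ with small halting probability, say $\Omega_V<1/4$; such a $V$ is obtained by padding any universal machine with a fixed-length prefix, which divides $\Omega$ by a power of two. The first try is $\alpha_s=\Omega_{V,s}+\Omega_{M,s}$: whenever the limit $\alpha=\Omega_V+\Omega_M$ lies in $[0,1]$, we have that $\alpha$ is left-c.e.\ and that $\alpha-\Omega_V=\Omega_M$ is left-c.e., so $\Omega_V\solr\alpha$. Since $\Omega_V$ is maximal for $\solr$ (by Theorem~\ref{thm:equiv-omega}), transitivity promotes $\alpha$ to also being maximal, hence random. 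The promise on $f$ then yields $\Omega_M=\alpha=\Omega_V+\Omega_M$, forcing $\Omega_V=0$ and contradicting $\Omega_V>0$.

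Main obstacle and fallback. The catch is that $\Omega_V+\Omega_M$ need not stay in $[0,1]$, and a naive cap at $1$ makes $\alpha$ rational and thus non-random, voiding the hypothesis of $f$. To handle this, the construction branches: as soon as the stage-$s$ approximation threatens to exceed a rational threshold below $1$, we switch to a secondary rule that uses a second random offset (supplied, say, by an independent universal machine $V'$ of small halting probability). The secondary rule must be chosen so that (i) the enumeration remains non-decreasing across the switch, (ii) the new limit is again provably random (via the same Solovay-maximality argument, by dominating a known random left-c.e.\ real derived from $\Omega_{V'}$), and (iii) the resulting fixed-point equation $\Omega_M=\alpha$ is still impossible, now exploiting the lower bound on $\Omega_M$ implied by having switched regimes.

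Main difficulty. The principal technical step is balancing (i)–(iii) simultaneously: monotonicity of the left-c.e.\ approximation blocks ``shrinking'' $\alpha$ after the switch, while the randomness requirement prevents pinning $\alpha$ to a rational and the inconsistency requirement pulls the formula away from any random fixed point; a simple linear fallback $\alpha=a\Omega_M+b$ with $a\neq 1$ would always admit a random solution, so the coefficient of $\Omega_M$ in the fallback must also be kept equal to $1$. I expect the winning recipe to pair $\Omega_{V'}$ with a carefully chosen rational shift and a rational threshold so that, in both regimes, the fixed-point equation $\Omega_M=\alpha$ collapses to an impossible numerical identity between $\Omega_V$, $\Omega_{V'}$, and rationals, completing the diagonalization.
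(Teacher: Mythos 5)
Your primary construction is a genuinely nice observation and is sound as far as it goes: if $\Omega_V+\Omega_M\le 1$, then $\alpha=\Omega_V+\Omega_M$ is left-c.e., satisfies $\Omega_V\solr\alpha$ and is hence random by Theorem~\ref{thm:equiv-omega}, and the fixed-point equation $\Omega_M=\Omega_V+\Omega_M$ is absurd. But the case you defer---$\Omega_M$ close to $1$, so that there is no room above it inside $[0,1]$---is the entire difficulty, and the fallback you sketch cannot be completed within the framework you set up. You have correctly boxed yourself in: any fallback target in which $\Omega_M$ appears with coefficient $\ne 1$, or not at all, is a target the adversary can compute in advance; it is then either rational (so $\alpha$ is not random and the hypothesis on $f$ is void) or a fixed random left-c.e.\ real, which the adversary can (non-uniformly---but the adversary only needs to \emph{exist}) realize as the halting probability of an optimal machine by Theorem~\ref{thm:equiv-omega}. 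So every fallback must have the form $\alpha=\Omega_M+c$ with $c>0$, which again requires $\Omega_M\le 1-c$; the adversary pushes $\Omega_{M,s}$ up once more, forcing another switch, and if you switch infinitely often your thresholds converge to a real fixed in advance by the construction, which the adversary again matches. The underlying obstruction is that an adversary who keeps $\Omega_{M,s}$ just below $\alpha_s$ and eventually copies every increase you make will always achieve $\Omega_M=\alpha$; no strategy that attacks only the equation $\Omega_M=\alpha$ can win against it.

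The missing idea---the heart of the paper's proof---is that in this ``tracking'' case one must instead destroy the \emph{optimality} of $M$. The paper builds an auxiliary prefix-free machine $Q$ and requirements $R_d$ demanding a string $\sigma$ with $K_M(\sigma)>K_Q(\sigma)+d$. Concretely: reserve a code $\tau$, wait until $\mu(\dom(M))$ at stage $s$ is within $r_d=2^{-(|\tau|+d)}$ of $\alpha_s$ (if this never happens you get the global win $\Omega_M<\alpha$), then set $Q(\tau)=\sigma$ for some $\sigma$ with $K_M(\sigma)[s]>|\tau|+d$, and from then on restrain the growth of $\alpha$ so that $\alpha_t-\mu(\dom(M))[t]<r_d$ forever. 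Since a new $M$-description of $\sigma$ of length $\le|\tau|+d$ would increase $\mu(\dom(M))$ by at least $r_d$ in a single step, $M$ can never issue one, and $R_d$ is met. Randomness of $\alpha$ under these restraints is preserved by always targeting $q\beta+l$ for one fixed random left-c.e.\ $\beta$ and rationals $q,l$ that change only finitely often, with a finite-injury priority argument coordinating all the $R_d$. Your proposal contains no mechanism of this kind, so what remains is not a routine technicality but the main content of the theorem.
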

}
\addtocounter{theorem}{-1}
\begin{proof}
	First note that we can assume that the partial computable function $f$ is total. Indeed, define a total function $g$ as follows: for each input $e$, $g(e)$ is an index for a machine which on input~$\sigma$ waits for $f(e)$ to converge, and then copies $M_{f(e)}(\sigma)$.
	
	Fix a partial computable function $f$ taking indices for left-c.e.\ reals to indices for prefix-free machines. Using the recursion theorem, we will define a left-c.e.\ ML-random $\alpha = \alpha_e \in [0,1]$ using, in its definition, the index $f(e)$ of a prefix-free Turing machine $M_{f(e)}$. We must define $\alpha$ even if $M_{f(e)}$ is not optimal or $f(e)$ does not converge. We can always assume that $M_{f(e)}$ is prefix-free by not letting it converge on a string $\sigma$ if it has already converged on a prefix of $\sigma$; we can also assume that $f(e)$ converges by having $\alpha$ follow some fixed left-c.e.\ random $\beta$ (say the one chosen below) until $f(e)$ converges. During the construction of $\alpha$ we will also build an auxiliary machine $Q$. We will ensure that $\alpha$ is a random left-c.e. real, but that either $M_{f(e)}$ is not optimal (which will happen because for all $d$, there is $\sigma$ such that $K_{M_{f(e)}}(\sigma) > K_Q(\sigma) + d$), or $\mu(\dom(M_{f(e)}))$ is not $\alpha$. This will prove the theorem.
	
	In the construction, we will build $\alpha = \alpha_e$ (using the recursion theorem to know the index~$e$ in advance) while watching $M = M_{f(e)}$. (From now on, we drop the index $e$ everywhere; we will write $\alpha_s$ for the left-c.e.\ approximation to $\alpha$.) We will try to meet the requirements:
	\[ \text{$R_{d}$ : For some $\sigma$, $K_{M}(\sigma) > K_Q(\sigma) + d$.}\]
	If $M$ is universal, then there must be some $d$ such that, for all $\sigma$, $K_M(\sigma) \leq K_Q(\sigma) + d$. Thus meeting $R_d$ for every $d$ will ensure that $M$ is not universal.
	At the same time, we will be trying to get a global win by having $\mu(\dom(M)) \neq \alpha$.
	
	We will define stage-by-stage rationals $\alpha_0 < \alpha_1 < \alpha_2 < \cdots$ with $\alpha = \lim_s \alpha_s$. (Recall that an index for such a sequence is an index for $\alpha$.) Fix $\beta$ a left-c.e.\ random, $\frac{3}{4} < \beta < 1$. We will have $\alpha = q\beta+l$ for some $q, l \in \mathbb{Q}$, $q > 0$, so that $\alpha$ will be random (indeed, multiplying by the denominator of~$q$ and subtracting $\beta$, we see that $\beta \solr \alpha$, and since $\beta$ is random, by Theorem~\ref{thm:equiv-omega}, so is $\alpha$). It is quite possible that we will have $\alpha = \beta$. Let $\beta_0 < \beta_1 < \beta_2 < \cdots$ be a computable sequence of rationals with limit $\beta$. At each stage $s$ we will define $\alpha_s = q_s\beta_s + l_s$ for some $q_s, l_s \in \mathbb{Q}$ in such a way that $q = \lim_s q_s$ and $l = \lim_s l_s$ are reached after finitely many stages. We think of our opponent as defining the machine $M$ with measure $\gamma_s$ at stage $s$, with $\gamma = \lim_s \gamma_s$ the measure of the domain of $M$. Our opponent must keep $\gamma_s \leq \alpha_s$, as if they ever have $\gamma_s > \alpha_s$ then we can immediately abandon the construction and choose $q,l$ such that $\alpha = q \beta + l$ has $\alpha_s < \alpha < \gamma$ and get a global win. Our opponent also has to (eventually) increase $\gamma_s$ whenever we increase $\alpha_s$, or they will have $\gamma < \alpha$. However, they may wait to do this. But, intuitively speaking, whenever we increase $\alpha_s$, we can wait for our opponent to increase $\gamma_s$ correspondingly (as long as, in the meantime, we work towards making $\alpha$ random).
	
	The requirements can be in one of four states: \textbf{inactive}, \textbf{preparing}, \textbf{waiting}, and \textbf{restraining}. Unless it is injured by a higher priority requirement, in which case it becomes \textbf{inactive}, a requirement will begin \textbf{inactive}, then be \textbf{preparing}, before switching back and forth between \textbf{waiting} and \textbf{restraining}.
	
	Before giving the formal construction, we will give an overview. To start, each requirement will be \textbf{inactive}. When activated, a requirement will be in state \textbf{preparing}. When entering state \textbf{preparing}, a requirement $R_d$ will have a \textit{reserved code} $\tau \in 2^{< \omega}$ and a \textit{restraint} $r_d = 2^{-(|\tau| + d)}$. The reserved code $\tau$ will be such that $Q$ has not yet converged on input $\tau$ nor on any prefix or extension of~$\tau$, so that we can still use $\tau$ as a code for some string $\sigma$ to make $K_Q(\sigma) \leq |\tau|$. While in this state, our left-c.e.\ approximation to $\alpha$ will copy that of $\beta$. The requirement $R_d$ will remain in this state until the measure of the domain of the machine $M$ is close to our current approximation to $\alpha$, namely, within $r_d$. (If our opponent does not increase the measure of $M$ as we increase the approximation to $\alpha$, then we win.) At this point, we will set $Q(\tau) = \sigma$ for some string $\sigma$ for which $K_M(\sigma)$ is currently greater than $|\tau|+d$. The requirement will move into state \textbf{waiting}. From now on, we are trying to ensure that $M$ can never converge on a string of length $\leq |\tau| + d$, so that $K_M(\sigma)$ will never drop below $|\tau|+d$, satisfying $R_d$. We do this by having the approximation to $\alpha_s$ grow very slowly, so that $M$ can only add a small amount of measure at each stage. $R_d$ will now move between the states \textbf{waiting} and \textbf{restraining}. The requirement $R_d$ will remain in state \textbf{waiting} at stages $s$ when the measure of the domain of $M$ is close (within $r_d$) to $\beta_s$, so that $R_d$ is content to have $\alpha$ approximate $\beta$. However, at some stages $s$, it might be that $\beta_s$ is at least $r_d$ greater than $\gamma_s$, the measure of the domain of $M$ so far. In this case, $R_d$ is in state \textbf{restraining} and has to actively restrain $\alpha_s$ to not increase too much. Letting $l=\alpha_{s-1}$ and $q = r_d - (\alpha_{s-1} - \gamma_s)$, where $s$ is the stage when $R_d$ enters the state \textbf{restraining}, $R_d$ has $\alpha$ temporarily approximate $q \beta + l$. Whenever the measure of the domain of $M$ increases by $\frac{1}{2} r_d$, $R_d$ updates the values of $q$ and $l$ (recall that $\beta \geq \frac{3}{4}$). Thus, each time the values of $q$ and $l$ are reset, the measure of the domain of $M$ has increased by at least $\frac{1}{2}r_d$. (Again, if our opponent does not increase the measure of $M$ as we increase the approximation to $\alpha$, then we win.) This can happen at most finitely many times until the measure of the domain of $M$ is within $r_d$ of the current approximation to $\beta$, and so the requirement re-enters state \textbf{waiting}.\footnote{Of course, the requirement does not have to re-enter state \textbf{waiting}, but in this case the values of $q$ and $l$ are eventually fixed.} The requirement may then later re-enter state \textbf{restraining} if the approximation to $\beta_s$ increases too much faster than the measure of the domain of $M$, but since the measure of the domain of $M$ will increase by at least $\frac{1}{2} r_d$ every time $R_d$ switches from \textbf{restraining} to \textbf{waiting}, $R_d$ can only switch finitely many times.
	
	Just considering one requirement, the possible outcomes of the construction are as follows:
	\begin{itemize}
		\item $\gamma_s > \alpha_s$ at some stage $s$, in which case we can immediately ensure that $\alpha < \gamma$ and that $\alpha$ is random.
		\item $\gamma < \alpha$; the requirement may get stuck in \textbf{preparing} or \textbf{restraining}. If it gets stuck in \textbf{preparing}, we have $\alpha = \beta$ is random. If it gets stuck in \textbf{restraining}, we have $\alpha = q \beta + l$, with $q$ and $l$ rational, and this is random.
		\item $\gamma = \alpha$; in this case, the requirement always leaves \textbf{preparing}, and every time it enters \textbf{restraining} it returns to \textbf{waiting}. After some stage, it is always in \textbf{waiting} and has $\alpha = \beta$, which is random. The requirement is satisfied by having $K_Q(\sigma) \leq |\tau|$ but $K_M(\sigma) > |\tau| + d$.
	\end{itemize}
	With multiple requirements, there is injury. A requirement only allows lower priority requirements to be active while it is \textbf{waiting}. Every stage at which a requirement is \textbf{preparing} or \textbf{restraining}, it injures all lower priority requirements. So, at any stage, there is at most one requirement---the lowest priority active requirement---which can be in a state other than \textbf{waiting}.
	
	\medskip{}
	
	\noindent \textit{Construction.}
	
	\medskip{}
	
	\noindent \textit{Stage $0$.} Begin with $\alpha_0 = 0$, all the requirements other than $R_0$ inactive, and $Q_0$ not converged on any input. 
	
	Set $\alpha_{s} = \beta_{s}$. Activate $R_{0}$ and put it in state \textbf{preparing}. Choose a reserved code $\tau_{0}$ such that $Q_s(\tau_{0}) \uparrow$ and set the restraint $r_{1} = 2^{-|\tau_{0}|}$.
	
	\medskip{}
	
	\noindent \textit{Stage $s > 0$.} Let $\gamma_s = \mu(\dom(M_s))$ be the measure of the domain of $M$ at stage $s$. If $\gamma_s > \alpha_{s-1}$, we can immediately end the construction, letting $\alpha_{t} = \alpha_{s-1} + (\gamma_s - \alpha_{s-1}) \beta_t$ for $t \geq s$, so that
	\[ \alpha = \lim_{t \to \infty} \alpha_t = \alpha_{s-1} + (\gamma_s - \alpha_{s-1}) \beta < \gamma_s \leq \mu(\dom(M_s)).\]
	So for the rest of this stage, we may assume that $\gamma_s \leq \alpha_{s-1}$.
	
	Find the highest-priority active requirement $R_d$, if it exists, such that $\beta_s - \gamma_s \geq r_d$. Cancel every lower priority requirement. Let $R_d$ be the lowest priority active requirement. (Every higher priority requirement is in state \textbf{waiting}.)
	
	\begin{case}
		$R_d$ is \textbf{preparing}.
	\end{case}
	
	\noindent Set $\alpha_{s} = \beta_s$. $R_d$ has a reserved code $\tau_d$ and restraint $r_d$. If $\beta_s - \gamma_s > r_d$, $R_d$ remains \textbf{preparing}. Otherwise, if $\beta_s - \gamma_s < r_d$, find a string $\sigma_d$ such that $K_M(\sigma_d)[s] > |\tau_d| + d$. Put $Q(\tau_d) = \sigma_d$. $R_d$ is now \textbf{waiting}.
	
	\begin{case}
		$R_d$ is \textbf{waiting} and $\beta_s - \gamma_s < r_d$.
	\end{case}
	
	\noindent Set $\alpha_{s} = \beta_{s}$. Requirement $R_d$ remains in state \textbf{waiting}. Activate $R_{d+1}$ and put it in state \textbf{preparing}. Choose a reserved code $\tau_{d+1}$ such that $Q_s(\tau_{d+1}) \uparrow$ and set the restraint $r_{d+1} = 2^{-|\tau_{d+1}| - d - 1}$.
	
	\begin{case}
		$R_d$ is \textbf{waiting} and $\beta_s - \gamma_s \geq r_d$.
	\end{case}
	
	\noindent Set the reference values $l_d = \alpha_{s-1}$ and $q_{d}= r_d - (\alpha_{s-1} - \gamma_s)$. (In Claim \ref{claim:alpha-well-defined} we will show that $q_d > 0$.) Put $R_d$ in state \textbf{restraining}. Set $\alpha_{s} = q_d \beta_s + l_d$.
	
	\begin{case}
		$R_d$ is in state \textbf{restraining}.
	\end{case}
	
	\noindent $R_d$ has a restraint $r_d$ and reference values $q_d$ and $l_d$. If $\gamma_s \leq l_d + \frac{1}{2} q_d$, keep the same reference values, and set $\alpha_{s} = q_d \beta_s + l_d$. If $\gamma_s > l_d + \frac{1}{2} q_d$, then what we do depends on whether $\beta_s - \gamma_s < r_d$ or $\beta_s - \gamma_s \geq r_d$. In either case, we call stage $s$ \textit{incremental for $R_d$}. If $\beta_s - \gamma_s < r_d$, then set $\alpha_s = \beta_s$ and put $R_d$ into state \textbf{waiting}. If $\beta_s - \gamma_s \geq r_d$, change the reference values $l_d$ and $q_d$ to $l_d = \alpha_{s-1}$ and $q_{d}= r_d - (\alpha_{s-1} - \gamma_s)$, and set $\alpha_{s} = q_d \beta_s + l_d$. $R_d$ remains \textbf{restraining}.
	
	\medskip{}
	
	\noindent \textit{End construction.}
	
	\medskip{}
	
	\noindent \textit{Verification.}
	\begin{claim}\label{claim:alpha-well-defined} At every stage $s>0$, $\alpha_{s-1}\leq \alpha_s \leq \beta_s$, and for every requirement $R_d$ which is active at stage $s$, either $R_d$ is \textup{\textbf{preparing}} or $\alpha_s - \gamma_s < r_d$.
\end{claim}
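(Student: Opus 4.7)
The plan is to argue by induction on $s$. The base case $s=1$ will be direct: at the end of stage $0$ only $R_0$ is active (in state \textbf{preparing}) and $\alpha_0 = \beta_0$, so stage $1$ falls into Case~1, giving $\alpha_1 = \beta_1 \geq \beta_0 = \alpha_0$, the upper bound is equality, and the condition on $R_0$ is either automatic (if $R_0$ stays \textbf{preparing}) or is exactly the case hypothesis $\beta_1 - \gamma_1 < r_0$.

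For the inductive step, I will assume the claim through stage $s-1$ and consider stage $s$, assuming $\gamma_s \leq \alpha_{s-1}$ (otherwise the construction terminates early). Let $R_d$ be the lowest priority active requirement. By how $R_d$ is selected at the start of the stage, every still-active $R_{d'}$ with $d' < d$ is in state \textbf{waiting} and satisfies $\beta_s - \gamma_s < r_{d'}$; so once I establish $\alpha_s \leq \beta_s$, these higher-priority requirements inherit $\alpha_s - \gamma_s \leq \beta_s - \gamma_s < r_{d'}$. Any $R_{d+1}$ newly activated in Case~2 is in state \textbf{preparing}, making its clause automatic. Hence it suffices to verify $\alpha_{s-1} \leq \alpha_s \leq \beta_s$ and the condition for $R_d$ itself, in each of the five branches of the construction.

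In the three branches that set $\alpha_s = \beta_s$ (Case~1, Case~2, and the \textbf{restraining}-to-\textbf{waiting} subcase of Case~4), monotonicity follows from the IH $\alpha_{s-1} \leq \beta_{s-1}$ plus monotonicity of $\beta$; the upper bound is equality; and the condition on $R_d$ is either automatic (\textbf{preparing}) or is exactly the case hypothesis $\beta_s - \gamma_s < r_d$. In the two branches setting fresh reference values $l_d = \alpha_{s-1}$ and $q_d = r_d - (\alpha_{s-1} - \gamma_s)$ (Case~3 and the corresponding subcase of Case~4), I will first check $q_d > 0$: the IH for $R_d$ gives $\alpha_{s-1} - \gamma_{s-1} < r_d$ (since $R_d$ was \textbf{waiting} or \textbf{restraining} at the previous stage), and $\gamma_s \geq \gamma_{s-1}$ yields $\alpha_{s-1} - \gamma_s < r_d$. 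Then $\alpha_s = q_d \beta_s + \alpha_{s-1} \geq \alpha_{s-1}$, and since $\beta_s < 1$ I get $\alpha_s - \gamma_s < q_d + \alpha_{s-1} - \gamma_s = r_d$. The upper bound $\alpha_s \leq \beta_s$ rearranges to $(r_d + \gamma_s)\beta_s + \alpha_{s-1}(1 - \beta_s) \leq \beta_s$, which holds because $r_d + \gamma_s \leq \beta_s$ (case hypothesis) and $\alpha_{s-1} \leq \beta_s$ (IH).

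The remaining subcase is Case~4 with $R_d$ staying in \textbf{restraining} at unchanged reference values $(q_d, l_d)$ that were installed at some earlier stage $t < s$. Monotonicity is $\alpha_s - \alpha_{s-1} = q_d(\beta_s - \beta_{s-1}) \geq 0$. The upper bound $\alpha_s \leq \beta_s$ propagates forward from stage $t$: the inequality $\alpha_t \leq \beta_t$ rearranges to $(1 - q_d)\beta_t \geq l_d$, and with $1 - q_d > 0$ and $\beta_s \geq \beta_t$ one has $(1 - q_d)\beta_s \geq l_d$, i.e., $\alpha_s \leq \beta_s$. For the restraint bound, $\alpha_s - \gamma_s < q_d + l_d - \gamma_s = r_d + \gamma_t - \gamma_s \leq r_d$, using that $q_d + l_d = r_d + \gamma_t$ by the definition at stage $t$ together with $\gamma_s \geq \gamma_t$. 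The main obstacle I expect is this amortized bookkeeping across \textbf{restraining} stages: one has to remember the identity $q_d + l_d = r_d + \gamma_t$ from the stage $t$ when the reference values were last set, and exploit monotonicity of $\beta$ and $\gamma$ to carry the invariant forward; the remaining branches are essentially immediate once one notices that $\alpha_s = \beta_s$ in three of them.
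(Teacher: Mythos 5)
Your proof is correct and follows essentially the same route as the paper's: induction on $s$, reduction to the lowest-priority active requirement, and a case split on the branch of the construction, with the incremental subcases of Case~4 folded back into Cases~1 and~3. The only difference is cosmetic: your identity $q_d + l_d = r_d + \gamma_t$ and the convex-combination bound for $\alpha_s \leq \beta_s$ package the same algebra the paper carries out by direct expansion.
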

\begin{proof} Assume the result holds for all $t<s$. Let $d$ be the lowest priority active requirement at stage $s$ (after the cancellation). By choice of $d$, for $d' < d$ we have $\beta_s - \gamma_s < r_{d'}$. We now check that no matter which case of the construction was used to define $\alpha_s$, the result holds. In all cases we will have $\alpha_s - \gamma_s \leq \beta_s - \gamma_s < r_{d'}$, so it is really $\alpha_s - \gamma_s < r_d$ that we must check.
\begin{enumerate}
\item At stage $s$ the construction was in Case 1 or Case 2. We set $\alpha_s = \beta_s \geq \beta_{s-1} \geq \alpha_{s-1}$. Either we are in Case 1 and $R_d$ remains \textbf{preparing}, or $\alpha_s - \gamma_s= \beta_s - \gamma_s < r_{d}$.
\item At stage $s$ the construction was in Case 3. We set $\alpha_s = q_d\beta_s + l_d$. Now in Case 3, $l_d = \alpha_{s-1}$ and $q_d = (r_d - (\alpha_{s-1} - \gamma_s))$.  Note that $\alpha_{s-1} - \gamma_s \leq \alpha_{s-1} - \gamma_{s-1} < r_d$ by induction, so $\alpha_s \geq \alpha_{s-1}$. Also
    \begin{align*}\alpha_s &= (r_d - (\alpha_{s-1} - \gamma_s))\beta_s +\alpha_{s-1} \\
    & = r_d\beta_s - (\alpha_{s-1} - \gamma_s)\beta_s + (\alpha_{s-1} - \gamma_s) + \gamma_s \\
    &= r_d\beta_s +(1-\beta_s)(\alpha_{s-1} - \gamma_s) +\gamma_s \\
    & < r_d\beta_s +(1-\beta_s)r_d +\gamma_s \\
    & = r_d + \gamma_s \\
    & \leq \beta_s.\end{align*} Finally, since we've just seen that $\alpha_s < r_d + \gamma_s$, we have that $\alpha_s - \gamma_s < r_d$.
\item At stage $s$ the construction was in Case 4. We set $\alpha_s = q_d\beta_s + l_d$. Then since $R_d$ was in state \textbf{restraining} at stage $s$, we must have defined $\alpha_{s-1}= q_d\beta_{s-1} + l_d$ \emph{unless} $s$ was an incremental stage, in which case $q_d$ and $l_d$ were reset at stage $s$ before defining $\alpha_s$. If $s$ \emph{was not} incremental, then $\alpha_s = q_d(\beta_s - \beta_{s-1}) + q_d\beta_{s-1} + l_d = q_d(\beta_s - \beta_{s-1})+\alpha_{s-1} \leq q_d(\beta_s - \beta_{s-1})+\beta_{s-1} \leq \beta_s$. Also $\alpha_{s-1}= q_d\beta_{s-1} + l_d \leq q_d\beta_{s} + l_d = \alpha_s$. Finally, if we let $\tilde{s}<s$ be the stage where $q_d$ and $l_d$ were last defined, then we see that \begin{align*}\alpha_s - \gamma_s & = q_d\beta_s + l_d - \gamma_s \\
    & = (r_d - (\alpha_{\tilde{s}-1} - \gamma_{\tilde{s}}))\beta_s + \alpha_{\tilde{s}-1} - \gamma_{s} \\
    & \leq (r_d - (\alpha_{\tilde{s}-1} - \gamma_{\tilde{s}}))\beta_s + \alpha_{\tilde{s}-1} - \gamma_{\tilde{s}} \\
    &= r_d\beta_s + (1-\beta_s)(\alpha_{\tilde{s}-1} - \gamma_{\tilde{s}}) \\
    & < r_d.
    \end{align*}
    Now suppose stage $s$ was incremental for $R_d$. If $\beta_s - \gamma_s < r_d$, then the result follows as in (1), and if $\beta_s - \gamma_s \geq r_d$, then the result follows as in (3).\qedhere
\end{enumerate}
\end{proof}
	\begin{claim}\label{claim:fin-exp}
		Suppose that the requirement $R_d$ is activated at stage $s$ and never injured after stage $s$. Then $R_d$ has only finitely many incremental stages.
	\end{claim}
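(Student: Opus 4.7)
The plan is to exploit the fact that each incremental stage witnesses a concrete, quantifiable increase in $\gamma_s = \mu(\dom(M_s))$; since $\gamma_s$ is bounded by $1$, there can only be boundedly many such stages. Because $R_d$ is never injured after stage $s$, its reference values and state evolve strictly according to the construction, so this bookkeeping is well-defined. I would first introduce the notion of a \emph{setting stage} for $R_d$: a stage at which the reference values $l_d, q_d$ are assigned or reassigned. By inspection these are exactly (i) the stages where Case 3 applies (entering \textbf{restraining} from \textbf{waiting}) and (ii) the non-waiting branch of Case 4 at an incremental stage. For any incremental stage $t$, let $t'$ denote the most recent setting stage, so the values used at $t$ to check incrementality are $l_d = \alpha_{t'-1}$ and $q_d = r_d - (\alpha_{t'-1} - \gamma_{t'})$.

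The heart of the argument is a short calculation. Since $t$ is incremental, $\gamma_t > l_d + \tfrac{1}{2} q_d$, which expanded yields
\[
\gamma_t - \gamma_{t'} \;>\; \tfrac{1}{2}(\alpha_{t'-1} - \gamma_{t'}) + \tfrac{1}{2} r_d \;\geq\; \tfrac{1}{2} r_d,
\]
where the last inequality uses $\gamma_{t'} \leq \alpha_{t'-1}$, which holds because otherwise the construction would have abandoned at the very start of stage $t'$.

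Next I would verify that for consecutive incremental stages $t_1 < t_2$ for $R_d$, the setting stage $t_2'$ of $t_2$ satisfies $t_2' \geq t_1$. There are two subcases. If $t_1$ left $R_d$ in \textbf{restraining}, then $t_1$ itself is a setting stage; since there are no incremental stages strictly between $t_1$ and $t_2$ and the only way to exit \textbf{restraining} is via the \textbf{waiting} branch of Case 4 (which requires an incremental stage), the values persist, so $t_2' = t_1$. If instead $t_1$ moved $R_d$ to \textbf{waiting}, then for $t_2$ to be incremental $R_d$ must have re-entered \textbf{restraining} via Case 3 at some stage $t_2^* \in (t_1, t_2)$, which is a setting stage, so $t_2' = t_2^* > t_1$. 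Combining with the previous calculation, $\gamma_{t_2} - \gamma_{t_1} \geq \gamma_{t_2} - \gamma_{t_2'} > \tfrac{1}{2} r_d$.

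Since $\gamma_s \leq 1$ for all $s$ (as $M$ is prefix-free), the values $\gamma_t$ at incremental stages for $R_d$ form a strictly increasing sequence with consecutive gaps exceeding $\tfrac{1}{2} r_d$, so there are at most $\lceil 2/r_d \rceil$ such stages. The main obstacle I anticipate is bookkeeping around the two cases for how $R_d$ behaves between consecutive incremental stages --- in particular, making sure that when $R_d$ cycles through \textbf{waiting} and back to \textbf{restraining}, a fresh setting stage occurs in between so that the monotonicity of $\gamma$ lets us transfer the lower bound. Everything else is a mechanical unwinding of the case definitions.
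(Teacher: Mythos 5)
Your proof is correct and follows essentially the same approach as the paper's: both arguments show that between consecutive incremental stages $\gamma$ increases by more than $\tfrac{1}{2}r_d$ (using $\gamma_{t'} \leq \alpha_{t'-1}$ at the stage where the reference values were last set), and then bound the number of such stages by $O(1/r_d)$ since $\gamma \leq 1$. The only difference is organizational — you factor the two cases through a single "setting stage" calculation, whereas the paper runs the analogous computation twice.
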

	\begin{proof}
		The restraint $r_d$ is defined when $R_d$ is activated, and never changes after stage $s$. Suppose to the contrary that there are incremental stages $s_0 < s_1 < s_2 < \cdots$ after stage $s$. We claim that $\gamma_{s_{i+1}} \geq \frac{1}{2}r_d + \gamma_{s_i}$. From this it follows that there are at most $2/r_d$ incremental stages for $R_d$, as if there were that many incremental stages, for some sufficiently large stage $t$ we would have $\gamma_t$ greater than $1$ and hence greater than $\alpha_{t-1}$---and so the construction could immediately end, with finitely many incremental stages.
		
		Fix $i$ for which we will show that $\gamma_{s_{i+1}} \geq \frac{1}{2}r_d + \gamma_{s_i}$. Since stage $s_i$ is incremental, at the start of that stage $R_d$ is in stage \textbf{restraining}. There are two cases, depending on whether $\beta_{s_i} - \gamma_{s_i} < r_d$ or $\beta_{s_i} - \gamma_{s_i} \geq r_d$.	
		
		\textbf{Case 1:} $\beta_{s_i} - \gamma_{s_i} < r_d$. During stage $s_i$, the requirement $R_d$ enters state \textbf{waiting}. Since stage $s_{i+1}$ is the next incremental stage, there must be some unique stage $t$, $s_i < t < s_{i+1}$, where $R_d$ enters state \textbf{restraining} again and stays in that state until at least stage $s_{i+1}$. At stage $t$ we define $l_d = \alpha_{t-1}$ and $q_d = r_d - (\alpha_{t-1} - \gamma_t)$. These values cannot be redefined until the next incremental stage, $s_{i+1}$, where we have $\gamma_{s_{i+1}} > l_d + \frac{1}{2} q_d$. Then:
		\begin{align*}
		\gamma_{s_{i+1}} &> l_d + \frac{1}{2} q_d \\
		& = \alpha_{t-1} + \frac{1}{2}(r_d - (\alpha_{t-1} - \gamma_{t}))\\
		& = \frac{1}{2}r_d + \frac{1}{2}(\alpha_{t-1} +\gamma_{t})\\
		& \geq \frac{1}{2}r_d + \gamma_{t} \\
		& \geq \frac{1}{2}r_d + \gamma_{s_i}.
		\end{align*}
		
		\textbf{Case 2:} $\beta_{s_i} - \gamma_{s_i} \geq r_d$. During stage $s_i$, the requirement $R_d$ remains in state \textbf{restraining}, defining $\ell_d = \alpha_{s_i - 1}$ and $q_d = r_d - (\alpha_{s_{i-1}}-\gamma_{s_i - 1})$. It stays in that state, with the same reference values $q_d$ and $l_d$, until the next incremental stage $s_{i+1}$, where we have $\gamma_{s_{i+1}} > l_d + \frac{1}{2} q_d$. We get a similar computation to the previous case:
		\begin{align*}
		\gamma_{s_{i+1}} &> l_d + \frac{1}{2} q_d \\
		& = \alpha_{s_i-1} + \frac{1}{2}(r_d - (\alpha_{s_i-1} - \gamma_{s_i}))\\
		& = \frac{1}{2}r_d + \frac{1}{2}(\alpha_{s_i-1} +\gamma_{s_i})\\
		& \geq \frac{1}{2}r_d + \gamma_{s_i}.
		\end{align*}
		
		Thus for each $i$ we have $\gamma_{s_{i+1}} \geq \frac{1}{2}r_d + \gamma_{s_i}$, completing the proof of the claim.
	\end{proof}
	
	\begin{claim}\label{claim:fin-inj}
		Each requirement is injured only finitely many times.
	\end{claim}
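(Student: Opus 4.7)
The plan is to prove the claim by induction on the priority index $d$. The base case $d = 0$ is immediate, since $R_0$ has no higher-priority requirement that could cancel it.

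For the inductive step, I would assume each of $R_0, \ldots, R_{d-1}$ is injured only finitely often and fix a stage $t_0$ after which none of them is injured. For each $R_{d'}$ with $d' < d$ that has ever been activated, applying Claim~\ref{claim:fin-exp} starting from its final activation stage yields only finitely many incremental stages after $t_0$. Since transitions out of \textbf{restraining} occur only at incremental stages (Case~4), the transition \textbf{preparing}~$\to$~\textbf{waiting} occurs at most once per activation, and every transition \textbf{waiting}~$\to$~\textbf{restraining} (Case~3) is either followed by a subsequent incremental stage returning to \textbf{waiting} or is the final such transition, each $R_{d'}$ undergoes only finitely many state transitions after $t_0$. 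Hence there is a stage $t_1 \geq t_0$ after which each $R_{d'}$ with $d' < d$ is permanently in one of: inactive, \textbf{preparing}, \textbf{waiting}, or \textbf{restraining}.

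I would then case-split on the stable state of $R_{d-1}$, using the fact that $R_d$ can only be (re)activated via Case~2 applied to $R_{d-1}$, which requires $R_{d-1}$ to be in \textbf{waiting}. If $R_{d-1}$ is stably inactive, \textbf{preparing}, or \textbf{restraining}, then Case~2 is never applied to $R_{d-1}$ after $t_1$, so $R_d$ cannot be reactivated and is injured at most once more after $t_1$. If $R_{d-1}$ is stably in \textbf{waiting}, then $R_{d-1}$ itself cannot be selected after $t_1$ (selection from \textbf{waiting} would trigger Case~3 and move $R_{d-1}$ to \textbf{restraining}, contradicting stability), and by the inductive hypothesis no higher-priority $R_{d''}$ with $d'' < d-1$ can be selected at any stage $s > t_1$ at which $R_{d-1}$ is active (such a selection would cancel and thereby injure $R_{d-1}$). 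Hence no higher-priority requirement is selected after $t_1$, and $R_d$ is injured only finitely often.

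The main obstacle I anticipate is cleanly justifying that being ``stably'' in a given state really precludes further selection, and relating this informal stability to the formal construction. In particular, one must carefully argue in the final case that even a single stage after $t_1$ with $\beta_s - \gamma_s \geq r_{d-1}$ would immediately trigger Case~3 on $R_{d-1}$, contradicting stability, so that stable waiting forces $\beta_s - \gamma_s < r_{d-1}$ at all subsequent stages; similarly one must verify in the other cases that the stable state of $R_{d-1}$ really does preclude Case~2, which is directly visible from the construction.
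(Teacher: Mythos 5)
Your proof is correct and follows essentially the same route as the paper: induction on priority, an appeal to Claim~\ref{claim:fin-exp} to rule out incremental stages for higher-priority requirements, and the key observation that without incremental stages a requirement entering \textbf{restraining} can never return to \textbf{waiting}, so reactivation (and hence further injury) of $R_d$ stops. The only difference is presentational --- the paper argues directly that the first injury after the chosen stage traps the injuring requirement in \textbf{restraining}, giving at most one further injury, whereas you first establish full stabilization of all higher-priority states and then case-split on the stable state of $R_{d-1}$ --- but the underlying mechanism is identical.
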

	\begin{proof}
		We argue by induction on the priority of the requirements. Suppose that each requirement of higher priority than $R_d$ is only injured finitely many times. Fix a stage $s$ after which none of them are injured. By the previous claim, by increasing $s$ we may assume that no higher priority requirement has an incremental stage after stage $s$.
		
		First of all, $R_d$ can only be activated at stages when every higher priority requirement is \textbf{waiting}. If $R_d$ is never activated after stage $s$, then it cannot be injured. Increasing $s$ further, assume that $R_d$ is activated at stage $s$. If $R_d$ is injured after stage $s$, it is at the first stage $t > s$ such that a requirement $R_e$ of higher priority than $R_d$ has $\beta_t - \gamma_t > r_e$. Moreover, the requirement $R_e$ remains in the state \textbf{waiting} until such a stage. Suppose that $t > s$ is the first such stage, if one exists. At the beginning of stage $t$, $R_e$ is \textbf{waiting}, and so $R_e$ enters the state \textbf{restraining}. Then $R_e$ can only leave state \textbf{restraining}, and re-enter state \textbf{waiting}, at a stage which is incremental for $R_e$; since there are no such stages after stage $s$, $R_e$ can never re-enter stage \textbf{waiting}. So even $R_d$ is never again re-activated, and so cannot be injured. Thus $R_d$ can be injured only once after stage $s$, proving the claim.
	\end{proof}
	
	\begin{claim}\label{claim:random}
		$\alpha = \lim_s \alpha_s$ is random.
	\end{claim}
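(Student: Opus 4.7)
The plan is to show that in every possible outcome of the construction, $\alpha$ ends up having the form $q\beta+l$ for some positive rational $q$ and rational $l$. Once this is established, randomness of $\alpha$ follows from randomness of $\beta$ exactly as noted by the authors in the discussion preceding the construction: multiplying by the denominator of $q$ and subtracting a rational gives $\beta \solr \alpha$, and by Theorem \ref{thm:equiv-omega} being above a Solovay-maximal left-c.e.\ real forces $\alpha$ to itself be Solovay-maximal, hence random.

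I would then proceed by a case analysis on the eventual behaviour of the construction. The easy case is that the construction terminates at some stage $s$ with $\gamma_s > \alpha_{s-1}$: the construction itself prescribes $\alpha_t = \alpha_{s-1} + (\gamma_s - \alpha_{s-1})\beta_t$ for all $t \geq s$, so $\alpha = (\gamma_s - \alpha_{s-1})\beta + \alpha_{s-1}$ is of the desired form with $q = \gamma_s - \alpha_{s-1} > 0$ rational.

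Otherwise the construction runs forever, and I would invoke Claims \ref{claim:fin-inj} and \ref{claim:fin-exp} to fix a stage $s^{*}$ after which no requirement is injured and no requirement has an incremental stage. The key observation, which I would establish next, is that after $s^{*}$ any requirement $R_d$ that enters state \textbf{restraining} is stuck there with fixed reference values $q_d, l_d$, because both transitioning back to \textbf{waiting} and updating $q_d, l_d$ would require an incremental stage for $R_d$; moreover, while $R_d$ is in \textbf{restraining} no new lower-priority requirement can be activated, since activation via Case 2 demands the enclosing requirement be in state \textbf{waiting}. So $R_d$ is the permanent lowest-priority active requirement, Case 4 (non-incremental branch) fires at every later stage, and $\alpha_s = q_d \beta_s + l_d$ for all sufficiently large $s$; positivity of $q_d$ comes from Claim \ref{claim:alpha-well-defined} applied at the stage when $q_d, l_d$ were defined, as already indicated in the statement of that claim.

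The remaining sub-case is that no requirement ever enters state \textbf{restraining} after $s^{*}$. Then at each stage past $s^{*}$ only Cases 1 and 2 of the construction can apply (Case 3 would put a requirement into \textbf{restraining}, and Case 4 would need one already there), and both of these explicitly set $\alpha_s = \beta_s$, so $\alpha = \beta = 1 \cdot \beta + 0$. I expect the only mildly delicate point to be verifying that the "lowest priority active requirement" really is the one whose state governs $\alpha_s$ across these sub-cases, which is already implicit in the setup; the arithmetic and convergence content has all been absorbed into Claims \ref{claim:alpha-well-defined}--\ref{claim:fin-inj}.
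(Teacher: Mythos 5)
There is a genuine gap in the step where you ``fix a stage $s^{*}$ after which no requirement is injured and no requirement has an incremental stage.'' Claims~\ref{claim:fin-inj} and~\ref{claim:fin-exp} only give, for each individual requirement $R_d$, a stage $s_d$ after which $R_d$ is neither injured nor has incremental stages; the stage $s_d$ depends on $d$, and $\sup_d s_d$ can be infinite. Concretely, in the outcome where every requirement eventually settles permanently in state \textbf{waiting}, infinitely many requirements get activated (a new one is activated each time the construction is in Case 2), and nothing prevents the opponent from letting $\gamma_s$ lag behind $\beta_s$ periodically so that each newly activated $R_d$ passes through \textbf{restraining} and has an incremental stage before settling. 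These incremental stages occur arbitrarily late, so no $s^{*}$ of the kind you want exists, and your third sub-case collapses: you cannot conclude that after $s^{*}$ only Cases 1 and 2 fire.

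The paper avoids this by not trying to control every stage past some threshold. In the ``all requirements settle in \textbf{waiting}'' outcome it instead observes that there are \emph{infinitely many} stages $t$ where the construction is in Case~2 (one witnessing the activation of each new requirement), and at each such stage $\alpha_t = \beta_t$ is set. Since $(\alpha_s)$ and $(\beta_s)$ are both nondecreasing with limits $\alpha$ and $\beta$, agreement at infinitely many stages already forces $\alpha = \beta$; no uniform $s^{*}$ is needed. Your handling of the early-termination case and of the ``stuck in \textbf{restraining}'' case is fine (and the early-termination case is a nice addition, as the paper's three-way case split does not explicitly mention it), but the third branch needs to be replaced by the ``infinitely often'' argument rather than an ``eventually always'' one.
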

	\begin{proof}
		There are three possibilities.
		\begin{enumerate}
			\item Some requirement enters state \textbf{preparing} at stage $s$, and is never injured nor leaves state \textbf{preparing} after stage $s$.
			
			\medskip{}
			
			\noindent The requirement $R_d$ is the lowest priority requirement which is active at any point after stage $s$. In this case, at each stage $t \geq s$, we set $\alpha_{t} = \beta_t$ and so $\alpha = \beta$ is random.
			
			\item Some requirement enters state \textbf{restraining} at stage $s$, and is never injured nor leaves state \textbf{restraining} after stage $s$.
			
			\medskip{}
			
			\noindent The requirement $R_d$ is the lowest priority requirement which is active at any point after stage $s$. Increasing $s$, we may assume that this requirement $R_d$ never has an incremental stage after stage $s$. Then the target value $q_d \beta + l_d$ at stage $s$ is also the target value at all stages $t \geq s$. At each such stage $t \geq s$, we set $\alpha_{t+1} = q_d \beta_t + l_d$. Thus $\alpha = q_d \beta + l_d$, with $q_d,l_d \in \mathbb{Q}$, and so is random.
			
			\item For each requirement there is a stage $s$ after which the requirement is never injured and is always in state \textbf{waiting}.
			
			\medskip{}
			
			\noindent There are infinitely many stages $s$ at which we are in Case 2 of the construction. At every stage, all requirements except possibly for the lowest priority requirement are in state \textbf{waiting}. For requirements $R_1,\ldots,R_n$, there is some first stage $t$ at which the lowest priority requirement is in state \textbf{waiting} and never again leaves state waiting. At stage $t$, we must be in Case 2 of the construction. Indeed, in Case 3 the requirement $R_d$ leaves state \textbf{waiting}. In Case 2, we set $\alpha_{t} = \beta_t$. Moreover, we activate the next requirement, and the next requirement is never injured. So there is a greater corresponding first stage $t$ at which that requirement is in state \textbf{waiting} and never again leaves that state. Continuing, there are infinitely many stages at which we set $\alpha_{t} = \beta_t$. %Since by Claim \ref{claim:alpha-well-defined} $\alpha_{s-1} \leq \alpha_s \leq \beta_s$ at every stage $s$, 
			It follows that $\alpha = \beta$, which is random. \qedhere
		\end{enumerate}
	\end{proof}
	
	\begin{claim}
		Suppose that $\mu(\dom(M)) = \alpha$. For each requirement $R_d$, there is a stage $s$ after which the requirement is active, never injured, and is always in state \textbf{waiting}.
	\end{claim}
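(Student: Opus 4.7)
The plan is to argue, under the standing hypothesis $\mu(\dom(M)) = \alpha$, that the three possibilities considered in the proof of Claim \ref{claim:random} reduce to only the third: cases (1) and (2) each force $\gamma < \alpha$, and case (3) is precisely the statement of the current claim. So my work is to eliminate cases (1) and (2).

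For case (1), I assume for contradiction that some $R_d$ enters state \textbf{preparing} at stage $s$ and never leaves (and, after possibly increasing $s$ via Claim \ref{claim:fin-inj}, is never injured). By the construction, whenever $R_d$ is the lowest priority active requirement and is in \textbf{preparing} we apply Case 1, setting $\alpha_t = \beta_t$; since $R_d$ stays in \textbf{preparing} forever and is never again injured, no lower priority requirement is ever activated and no higher priority requirement ever cancels $R_d$, so $\alpha_t = \beta_t$ at every $t \geq s$, yielding $\alpha = \beta$. The hypothesis $\gamma = \alpha$ then forces $\gamma = \beta$, so $\beta_t - \gamma_t \to 0$. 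But then at some $t \geq s$ we have $\beta_t - \gamma_t < r_d$, and Case 1 moves $R_d$ into \textbf{waiting}, a contradiction.

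For case (2), suppose $R_d$ enters \textbf{restraining} at some stage and remains there forever. By Claim \ref{claim:fin-exp}, $R_d$ has only finitely many incremental stages, so past the last such stage the reference values $q_d > 0$ and $l_d$ are fixed and $\alpha_t = q_d \beta_t + l_d$ at every later stage, whence $\alpha = q_d \beta + l_d$. At the same time, the absence of further incremental stages means $\gamma_t \leq l_d + \tfrac{1}{2} q_d$ for all such $t$, so $\gamma \leq l_d + \tfrac{1}{2} q_d$. Since $\beta > \tfrac{3}{4} > \tfrac{1}{2}$, we get $q_d \beta > \tfrac{1}{2} q_d$ and therefore $\gamma \leq l_d + \tfrac{1}{2} q_d < l_d + q_d \beta = \alpha$, contradicting $\gamma = \alpha$.

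This leaves case (3) of Claim \ref{claim:random}: every requirement is eventually never injured and permanently in state \textbf{waiting}, which is exactly what we want. Activation of each $R_d$ follows inductively, since once $R_{d-1}$ has permanently settled in \textbf{waiting}, the next time the construction reaches Case 2 with $R_{d-1}$ the lowest priority active requirement, $R_d$ is activated---and such a stage must exist, because otherwise the lowest active requirement would be stuck outside of \textbf{waiting}, putting us back into case (1) or case (2). The one delicate point I foresee is the strict inequality $q_d \beta > \tfrac{1}{2} q_d$ used to eliminate case (2); this is exactly what the design choice $\beta > \tfrac{3}{4}$ at the start of the proof was meant to guarantee, and without it the whole trichotomy argument would collapse.
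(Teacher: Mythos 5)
Your proof is correct and relies on exactly the same two calculations as the paper's argument: the ``stuck in \textbf{preparing} forces $\alpha=\beta=\gamma$ and hence $\beta_t-\gamma_t<r_d$ eventually'' step, and the ``stuck in \textbf{restraining} with no more incremental stages forces $\gamma\le l_d+\tfrac12 q_d < l_d+q_d\beta=\alpha$'' step. The only difference is organizational: the paper runs a fresh induction on $d$, establishing directly for each $R_d$ that it must leave \textbf{preparing} and can never return to \textbf{restraining}, whereas you route the argument through the trichotomy already laid out in the proof of Claim~\ref{claim:random}; this is a harmless repackaging rather than a genuinely different route (and, like the paper, it silently uses that the hypothesis $\mu(\dom(M))=\alpha$ rules out the ``abandon'' branch of the construction, so the trichotomy is indeed exhaustive here).
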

	\begin{proof}
		We argue inductively that for each requirement $R_d$, there is a stage $s$ after which the requirement is never injured and is always \textbf{waiting}.
		
		By Claim \ref{claim:fin-inj} there is a stage $s$ after which $R_d$ is never injured, and (inductively) every higher priority requirement is always \textbf{waiting} after stage $s$. By Claim \ref{claim:fin-exp}, by increasing $s$ we may assume that $R_d$ has no incremental stages after stage $s$.
		
		Then $R_d$ is activated at the least such stage $s$ since each higher priority requirement is always \textbf{waiting}. Note that $R_d$ can never be injured after stage $s$, as if $R_d$ is injured by $R_e$, then $R_e$ enters state \textbf{restraining}.
		
		Now we claim that, if $R_d$ is \textbf{preparing}, it leaves that state after stage $s$. Indeed, if $R_d$ never left state preparing, we would have $\alpha = \beta$. By assumption, $\alpha = \mu(\dom(M)) = \lim_s \gamma_s$. Thus for some stage $t$ we must have that $\beta_t - \gamma_t < r_d$. At this stage $t$, $R_d$ leaves state \textbf{preparing}.
		
		Now we claim that $R_d$ can never enter state \textbf{restraining} after stage $s$. Since $R_d$ has no incremental stages after stage $s$, if $R_d$ did enter state \textbf{restraining}, it would never be able to leave that state. Moreover, $q_d$ and $l_d$ can never change their values. So we end up with $\alpha = q_d \beta + l_d$. Moreover, for all $t \geq s$, $\gamma_t < l_d + \frac{1}{2} q_d$, as there are no more incremental stages. Then $\gamma \leq l_d + \frac{1}{2}q_d < l_d + q_d \beta = \alpha$, contradicting the hypotheses of the claim. Thus $R_d$ can never enter state \textbf{restraining} after stage $s$.
		
		Thus we have shown that for sufficiently large stages, $R_d$ is in state \textbf{waiting}.
	\end{proof}
	
	\begin{claim}\label{claim:sat}
		Suppose that $\mu(\dom(M)) = \alpha$. Then every requirement $R_d$ is satisfied.
	\end{claim}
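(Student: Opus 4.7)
The plan is to leverage the previous claim, which guarantees that for every $d$ there is a stage $s_0$ after which $R_d$ is active, never injured, and always in state \textbf{waiting}. Since $R_d$ cannot end up \textbf{waiting} without having first passed through \textbf{preparing}, there is a (unique, final) stage $s' \geq s_0$ at which $R_d$ executes the transition out of \textbf{preparing} in Case~1. At that stage, the construction picks $\sigma_d$ with $K_M(\sigma_d)[s'] > |\tau_d|+d$ and commits $Q(\tau_d) = \sigma_d$, yielding $K_Q(\sigma_d) \leq |\tau_d|$ once and for all. It therefore suffices to show that $K_M(\sigma_d) > |\tau_d|+d$ persists at every stage $t \geq s'$; equivalently, that $M$ never converges on any string $p$ of length $\leq |\tau_d|+d$ after stage $s'$ (in particular $M(p) = \sigma_d$ is blocked).

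The core of the argument is the invariant supplied by Claim~\ref{claim:alpha-well-defined}: since $R_d$ is active and not in state \textbf{preparing} at every stage $t \geq s'$, we have $\alpha_t - \gamma_t < r_d$, and in particular $\gamma_{t-1} > \alpha_{t-1} - r_d$. Now suppose for contradiction that at some stage $t > s'$, $M$ newly converges on a string $p$ with $|p| \leq |\tau_d|+d$. Such a convergence contributes at least $2^{-|p|} \geq 2^{-(|\tau_d|+d)} = r_d$ to the domain measure, so
\[
\gamma_t \;\geq\; \gamma_{t-1} + r_d \;>\; (\alpha_{t-1} - r_d) + r_d \;=\; \alpha_{t-1}.
\]
But the very first action at stage $t$ is to check whether $\gamma_t > \alpha_{t-1}$, and in that case the construction terminates with $\alpha = \alpha_{t-1} + (\gamma_t - \alpha_{t-1})\beta < \gamma_t \leq \mu(\dom(M))$. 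This contradicts the standing hypothesis $\mu(\dom(M)) = \alpha$. Hence no such $p$ ever appears, $K_M(\sigma_d) > |\tau_d|+d \geq K_Q(\sigma_d) + d$ for all time, and $R_d$ is satisfied.

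There is no real obstacle beyond locating the right invariant: the whole architecture of the construction was designed precisely so that the restraint $r_d = 2^{-(|\tau_d|+d)}$ both (i) bounds how far $\alpha$ is ever allowed to run ahead of $\gamma$ once $R_d$ leaves \textbf{preparing}, and (ii) coincides with the minimum measure contributed by any string of length $\leq |\tau_d|+d$. The only mild bookkeeping point is to confirm that $K_Q(\sigma_d) \leq |\tau_d|$ really does hold globally, which follows because $\tau_d$ was chosen at the activation of $R_d$ to be incomparable with the current domain of $Q$, and because after stage $s'$ the requirement is never injured, so no later action of the construction interferes with the value $Q(\tau_d) = \sigma_d$.
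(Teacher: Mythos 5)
Your proposal is correct and follows essentially the same route as the paper: fix the final exit from \textbf{preparing}, note $K_Q(\sigma_d)\leq|\tau_d|$ is permanent, and use the invariant of Claim~\ref{claim:alpha-well-defined} together with the restraint $r_d=2^{-(|\tau_d|+d)}$ to show no string of length $\leq|\tau_d|+d$ can ever enter $\dom(M)$. The only cosmetic difference is that the paper bounds the per-stage increase $\gamma_{t'+1}-\gamma_{t'}\leq\alpha_{t'}-\gamma_{t'}<r_d$ directly, whereas you run the same inequality as a contradiction against the early-termination clause; these are the same argument.
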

	\begin{proof}
		Since $\mu(\dom(M)) = \alpha$, at all stages $s$, $\gamma_s \leq \alpha_{s-1}$. As argued in the previous claim, there is a stage $s$ at which $R_d$ is activated, and after which $R_d$ is never injured. At this stage $s$, $R_d$ enters state \textbf{preparing} and we choose $\tau_d$ such that $Q(\tau_d) \uparrow$ and set $r_d = 2^{-(|\tau_d| + d)}$.
		
		By the previous claim, $R_d$ exits state \textbf{preparing} at some stage $t > s$. At this point, we have $\beta_t - \gamma_t < r_d$. We choose a string $\sigma$ such that $K_M(\sigma) > |\tau_d| + d$ and put $Q(\tau_d) = \sigma$. Thus $K_Q(\sigma) \leq |\tau_d|$. $R_d$ enters state waiting, and $\alpha_s = \beta_s$.
		
		Since, at stage $t$, $K_M(\sigma) > |\tau|_d + d$, for every string $\rho$ with $|\rho| \leq |\tau_d| + d$, $M(\rho) \neq \sigma$.  For each stage $t' \geq t$ the requirement $R_d$ is no longer in state \textbf{preparing} and so by Claim \ref{claim:alpha-well-defined} we have $\gamma_{t' + 1} - \gamma_{t'} \leq \alpha_{t'} - \gamma_{t'} < r_d$. From this it follows that we can never have $M(\rho) = \sigma$ for any $\rho$ with $|\rho| \leq |\tau_d| + d$; if $M(\rho) = \sigma$ for the first time at stage $t' + 1 > t$, then we would have $\gamma_{t'+1} - \gamma_{t'} \geq |\rho| = r_d$, which as we just argued cannot happen.
	\end{proof}
	
	We can now use the claims to complete the verification. By Claim \ref{claim:random}, $\alpha = \lim_s \alpha_s$ is indeed random, and by Claim \ref{cl:1} $\alpha \leq \beta$ and so $\alpha \in [0,1]$. So the function $f$ must output the index of a machine $M$ with $\mu(M) = \alpha$. By Claim \ref{claim:sat}, each requirement is satisfied and so, for every $d$, there is $\sigma$ such that $K_{M}(\sigma) > K_Q(\sigma) + d$. Thus $M$ is not optimal, a contradiction. This completes the proof of the theorem.
\end{proof}

\subsection{Almost uniform constructions of optimal machines}

We just established that there is no uniform procedure to turn a left-c.e.\ Martin-L\"of random $\alpha \in [0,1]$ into a universal machine~$M$ such that $\Omega_M=\alpha$. However, algorithmic randomness offers a notion of `almost uniformity', known as layerwise computability, see~\cite{HoyrupR2009}: Let $(\mathcal{U}_k)$ be a fixed effectively optimal Martin-L\"of test, i.e., a Martin-L\"of test such that for any other Martin-L\"of test $(\mathcal{V}_k)$, there exists a constant~$c$ such that $\mathcal{V}_{k+c} \subseteq \mathcal{U}_k$ for all~$k$, and this constant~$c$ can be uniformly computed in an index of the Martin-L\"of test $(\mathcal{V}_k)$. Note that an effectively optimal Martin-L\"of test is in particular universal, i.e., $x$ is Martin-L\"of random if and only if $x \notin \mathcal{U}_d$ for some~$d$. A function $F$ from $[0,1]$ (or more generally, from a computable metric space) to some represented space $\mathcal{X}$ is layerwise computable if it is defined on every Martin-L\"of random~$x$ and moreover there is a partial computable~$f$ from $[0,1] \times \N$ to $\mathcal{X}$ where $f(x,d)=F(x)$ whenever $x \notin \mathcal{U}_d$.

Here we are in a different setting as we are dealing with indices of reals instead of reals, but by extension we could say that a partial function $F: \N \rightarrow \mathcal{X}$ is layerwise computable on left-c.e.\ reals if $F(e)$ is defined for every index~$e$ of a  random left-c.e.\ real, and if there is a partial computable function $f: \N \times \N \rightarrow \mathcal{X}$ such that $f(e,d)=F(e)$ whenever the left-c.e.\ real $\alpha_e$ of index~$e$ does not belong to $\mathcal{U}_d$ (note that the definition remains the same if~$f$ is required to be total). Even with this weaker notion of uniformity, uniform construction of optimal machines from their halting probabilities remains impossible.

\begin{theorem}
There does not exist a layerwise computable mapping~$F$ from indices for random left-c.e.\ reals $\alpha_e \in [0,1]$ to optimal machines such that $\Omega_{M_{F(e)}}=\alpha_e$.
\end{theorem}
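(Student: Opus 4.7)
The plan is to adapt the construction of Theorem~\ref{no-uniform-machine-constr} to the layerwise setting. Suppose for contradiction that $F$ is layerwise computable, witnessed by a partial computable $f$. Fix an arbitrary $d$ and let $g(e) := f(e, d)$; as in the earlier proof, we may assume $g$ is total. Via the recursion theorem we obtain $e$ and build a left-c.e.\ real $\alpha = \alpha_e$, aiming for $\alpha$ to be Martin-L\"of random, to satisfy $\alpha \notin \mathcal{U}_d$, and to diagonalize against $M_{g(e)}$ in the sense of the earlier construction. The defining property of a layerwise computable $F$ will then be violated.

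I would run essentially the same construction as in Theorem~\ref{no-uniform-machine-constr}, with one crucial modification: the choice of the base left-c.e.\ random real $\beta$. Recall that the output of that construction takes the form $\alpha = q\beta + l$, where $(q, l)$ is a pair of rationals determined by the construction (possibly equal to $(1,0)$). Inspecting the construction, the pair $(q, l)$ arises from the choices of restraints and reference values, and so lies in an effectively enumerable set $S \subseteq \mathbb{Q}_{>0} \times \mathbb{Q}_{\geq 0}$. Enumerate $S$ effectively as $(q_n, l_n)_{n \in \N}$, and define the uniformly c.e.\ open sets
\[
\mathcal{V}_k \;=\; \bigcup_{n \in \N} \Bigl\{x \in [0,1] : q_n x + l_n \in \mathcal{U}_{k + \lceil -\log q_n \rceil + n + 1}\Bigr\}.
\]
A direct measure calculation gives $\mu(\mathcal{V}_k) \leq \sum_n 2^{-k-n-1} \leq 2^{-k}$, so $(\mathcal{V}_k)$ is a Martin-L\"of test. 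By the effective optimality of $(\mathcal{U}_k)$, there is a constant $c^*$ with $\mathcal{V}_{k+c^*} \subseteq \mathcal{U}_k$ for every $k$. Pick $\beta$ to be any left-c.e.\ Martin-L\"of random real avoiding $\mathcal{U}_{d + c^*}$; this is possible because $\mathcal{U}_{d + c^*}$ has measure strictly less than $1$ and its complement contains left-c.e.\ randoms.

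With this choice of $\beta$, we have $\beta \notin \mathcal{V}_d$, so for every $(q_n, l_n) \in S$, $q_n \beta + l_n \notin \mathcal{U}_d$; in particular, $\alpha \notin \mathcal{U}_d$. The verification that $\alpha$ is random left-c.e.\ and that $M_{g(e)}$ fails to be an optimal machine with halting probability $\alpha$ proceeds exactly as in Theorem~\ref{no-uniform-machine-constr}. The main obstacle is the careful reading of the original construction to extract an effective enumeration of the set $S$ of possible final $(q, l)$ pairs and to confirm that $\alpha$ always falls in this pre-committed set; once this is secured, the remaining arguments transfer without essential change.
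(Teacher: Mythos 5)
Your strategy---fix the level $d$ in advance, diagonalize against $M_{f(e,d)}$ exactly as in Theorem~\ref{no-uniform-machine-constr}, and choose the base random $\beta$ so that the resulting $\alpha$ avoids $\mathcal{U}_d$---founders on the step where you conclude $\alpha \notin \mathcal{U}_d$, and I do not see how to repair it. From $\beta \notin \mathcal{V}_d$ you obtain, for each $n$, that $q_n\beta + l_n \notin \mathcal{U}_{d + \lceil -\log q_n\rceil + n + 1}$. But Martin-L\"of tests shrink as the index grows, so this set is \emph{contained in} $\mathcal{U}_d$, and avoiding it tells you nothing about avoiding $\mathcal{U}_d$ itself; the implication you need goes the wrong way. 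To actually get $q_n\beta+l_n\notin\mathcal{U}_d$ you would need $\beta$ to avoid $T_n=\{x : q_nx+l_n\in\mathcal{U}_d\}$, whose measure is $\mu(\mathcal{U}_d\cap[l_n,l_n+q_n])/q_n$, bounded only by $2^{-d}/q_n$---vacuous once $q_n<2^{-d}$. Worse, $\mathcal{U}_d$ is a dense open set (it contains every rational), so for small $q$ the interval $[l,l+q]$ can lie entirely inside $\mathcal{U}_d$, making $T_n$ all of $[0,1]$: \emph{no} choice of $\beta$ works for that pair. In the construction the scaling factors are bounded by the restraints $r_{d'}=2^{-(|\tau_{d'}|+d')}$ and hence tend to $0$, while the offsets $l_{d'}=\alpha_{s-1}$ are dictated by the opponent's timing, so the precommitment ``$q\beta+l\notin\mathcal{U}_d$ for every pair that might become final'' is not satisfiable for a $d$ fixed in advance. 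The underlying obstruction is that the randomness deficiency of $q\beta+l$ grows like $\log(1/q)$ plus the complexity of $(q,l)$, so you cannot pin $\alpha$ below a prescribed test level while the construction keeps shrinking $q$. (Two further, secondary issues: the set $S$ of pairs actually produced depends on $\beta$, which you choose using $S$; and effective optimality gives $\mathcal{V}_d\subseteq\mathcal{U}_{d-c^*}$, so avoiding $\mathcal{U}_{d+c^*}$ does not yield $\beta\notin\mathcal{V}_d$ either.)

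For comparison, the paper avoids this trap entirely by never trying to control which test level $\alpha$ avoids. It observes that a layerwise computable $F$ is in particular $\emptyset'$-partial computable, i.e.\ $F(e)=\lim_t f(e,t)$ for a total computable $f$, and then runs the diagonalization of Theorem~\ref{no-uniform-machine-constr} against the current guess $f(e,s_i)$, restarting inside a fresh interval $[\xi_i,\xi_{i+1}]$ of a fixed random left-c.e.\ real $\xi$ whenever the guess changes. If the guess stabilizes, the diagonalization succeeds outright; if not, $\alpha=\xi$ is random while $F(e)$ is undefined. If you want to salvage your line of attack, you would have to let $d$ vary (which is essentially what the limit formulation does) rather than freeze it at the outset.
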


\begin{proof}
This is in fact a consequence of a stronger result: there is no $\emptyset'$-partial computable function $F$ such that $F(e)$ is defined whenever $\alpha_e$ is Martin-L\"of random and $\Omega_{M_{F(e)}}=\alpha_e$. Since a $\emptyset'$-partial computable function can be represented by a total computable function $f(.,.)$ such that for every $e$ on which $F$ is defined, $\lim_t f(e,t)=F(e)$, we see that a layerwise computable function on left-c.e.\ reals is a particular case of $\emptyset'$-partial computable function.\\

Let now $F$ be a $\emptyset'$-partial computable function and $f$ a total computable such that $\lim_t f(e,t) = F(e)$ whenever~$F(e)$ is defined.

The idea is to run the same construction as in Theorem~\ref{no-uniform-machine-constr}, but instead of playing against the machine of index $f(e)$, we play against the machine of index $f(e,s_0)$, with $s_0=0$. If at some point we find a $s_1>s_0$ such that $f(e,s_1) \not= f(e,s_0)$, we restart the entire construction, this time playing against the machine of index $f(e,s_1)$, until we find $s_2>s_1$ such that  $f(e,s_2) \not= f(e,s_1)$, then restart, etc. Of course when we restart the construction, we cannot undo the increases we have already made on $\alpha$. This problem is easily overcome as follows. First observe that the strategy presented in the proof of Theorem~\ref{no-uniform-machine-constr}, is robust: instead of starting at $\alpha=0$, and staying in the interval $[0,1]$ throughout the construction, for any rational interval $[a,b] \subseteq [0,1]$, we could have started the construction with $\alpha_0=a$ and stayed within $[a,b]$ by -- for example -- targeting the random real $a+(b-a)\beta$ instead of $\beta$. Now, let $\xi$ be a random left-c.e.\ real in $[0,1]$ with computable lower approximation $\xi_0 < \xi_1 < \ldots$. We play against the machine of index $f(e,s_i)$ by applying the strategy of Theorem~\ref{no-uniform-machine-constr} with the added constraint that $\alpha$ must stay in the interval $[\xi_i,\xi_{i+1}]$. If we then find a $s_{i+1}$ such that $f(e,s_{i+1}) \not= f(e,s_i)$, we then move to the next interval $[\xi_{i+1},\xi_{i+2}]$ and apply the strategy to diagonalize against the machine of index $f(e,s_{i+1})$ while keeping~$\alpha$ in this interval, etc.

There are two cases:
\begin{itemize}
\item Either $f(e,t)$ eventually stabilizes to a value $f(e,s_k)$, in which case we get to fully implement the diagonalization against the machine of index $f(e,s_k)=F(e)$, which ensures that $\alpha_e \not= \Omega_{M_{F(e)}}$ or that $M_{F(e)}$ is not optimal.
\item Or $f(e,t)$ does not stabilize, in which case we will infinitely often move $\alpha$ from the interval $[\xi_i,\xi_{i+1}]$ to $[\xi_{i+1}, \xi_{i+2}]$, which means that the limit value of $\alpha=\alpha_e$ will be $\xi$, hence $\alpha_e$ is random, while $F(e)$ is undefined since $f(e,t)$ does not converge.
\end{itemize}

In either case, we have shown what we wanted.
\end{proof}

Finally, we can consider a yet weaker type of non-uniformity. In the definition of layerwise computability on left-c.e.\ reals, we asked that for $\alpha_e \notin \mathcal{U}_d$, the machine of index $f(e,d)$ has halting probability~$\alpha_e$ \emph{and} $f(e,d) = f(e,d')$ if $\alpha_e \notin \mathcal{U}_d \cup \mathcal{U}_{d'}$. Here we could try to remove this last condition by allowing $f(e,d)$ and $f(e,d')$ to be codes for different machines (but both with halting probabilities $\alpha_e$). In this setting, we do get a positive result.

\begin{theorem}
There exists a partial computable function~$f(.,.)$ such that if $\alpha_e \notin \mathcal{U}_d$, $\alpha_e \in [0,1]$, then $f(e,d)$ is defined and $\Omega_{M_{f(e,d)}}=\alpha_e$.
\end{theorem}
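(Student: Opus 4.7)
The plan is to build on the earlier observation (implicit in the discussion of layerwise computability) that the parameter $d$ gives us a uniformly computable "randomness budget". First, I would use $d$ to extract a concrete lower bound on $\alpha_e$ in terms of the halting probability of a fixed universal machine $V^*$, then feed that bound into a Kraft--Chaitin construction of the target machine.

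More precisely, fix a universal prefix-free machine $V^*$ with $\Omega^* := \Omega_{V^*}$. The sequence $\mathcal{V}_k := \{x \in [0,1] : x < 2^{-k}\Omega^*\}$ is a uniformly c.e.\ open sequence (since $\Omega^*$ is left-c.e.) with $\mu(\mathcal{V}_k) = 2^{-k}\Omega^* \leq 2^{-k}$, hence a Martin-L\"of test. By the effective optimality of $(\mathcal{U}_k)$ I get a uniformly computable constant $c$ such that $\mathcal{V}_{k+c} \subseteq \mathcal{U}_k$ for all $k$. Thus $\alpha_e \notin \mathcal{U}_d$ forces $\alpha_e \geq 2^{-(d+c)} \Omega^*$, and I set $n := d+c$.

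Next, I would build $M_{f(e,d)}$ through the Kraft--Chaitin theorem by enumerating a c.e.\ request set $L$ whose total weight equals $\alpha_e$ and which contains, for every $\tau \in \dom(V^*)$, the pair $(|\tau|+n+1, V^*(\tau))$. Inclusion of these "simulation" pairs yields $K_{M_{f(e,d)}} \leq K_{V^*} + n + 1$, so the resulting machine is optimal. The key quantitative point is that the total weight of simulation pairs is $2^{-n-1}\Omega^* \leq \alpha_e/2$, so at least half of $\alpha_e$ remains available for auxiliary "filler" pairs with irrelevant outputs.

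The enumeration runs in stages: at stage $s$ I compute $\alpha_s$ and $V^*_s$, push each new convergence $V^*(\tau) = x$ onto a pending queue $Q_s$ of simulation pairs, then flush from $Q_s$ into $L_s$ every pair whose weight fits within the current budget $\alpha_s - w_s$ (where $w_s = \sum_{L_s} 2^{-l}$), and finally add filler requests to drive $w_s$ towards $\alpha_s$. The main obstacle is the scheduling: the filler amount $\alpha_e - 2^{-n-1}\Omega^*$ is a difference of left-c.e.\ reals and need not itself be left-c.e., so one cannot enumerate simulation and filler independently. The hard part is to interleave the two in a single c.e.\ stream while maintaining the invariant $w_s \leq \alpha_s$ and guaranteeing both that every simulation pair is eventually placed in $L$ and that $w_s \to \alpha_e$; the slack $\alpha_e \geq 2 \cdot 2^{-n-1}\Omega^*$ is precisely what makes a prioritized schedule (favor pending simulation pairs, throttle filler by a reservation that shrinks with the approximation of $\Omega^*$) succeed.
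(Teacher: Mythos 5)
There is a genuine gap, and it sits exactly where you located the ``hard part.'' The test $\mathcal{V}_k=\{x : x<2^{-k}\Omega^*\}$ only certifies a bound on the \emph{values} of the two reals, $\alpha_e\geq 2^{-n}\Omega^*$, whereas the scheduling problem requires control of the \emph{increments} of their approximations. To see that no clever interleaving can rescue the construction from the value bound alone: in any enumeration of the kind you describe, the set $S$ of simulation requests and the set $F$ of filler requests are each c.e., so each has left-c.e.\ weight; since the weight of $S$ is exactly $2^{-n-1}\Omega^*$ and the total is $\alpha_e$, your construction succeeds only if $\alpha_e-2^{-n-1}\Omega^*$ is left-c.e.\ --- and, moreover, only if your algorithm uniformly produces a monotone approximation to it (namely the weight of $F_s$). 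The value bound gives neither. For random $\alpha_e$ the difference $\alpha_e - 2^{-j}\Omega^*$ is indeed left-c.e.\ for all sufficiently large $j$, but that is the Ku\v cera--Slaman theorem, not a consequence of $\alpha_e\geq 2^{-n}\Omega^*$; the $n=d+c$ you derive carries no such guarantee, and even when it happens to suffice you have no uniform way to approximate the difference from below. This is also why the reservation idea fails concretely: a safe reservation at stage $s$ must dominate $2^{-n-1}(\Omega^*-\Omega^*_s)$, but $\Omega^*$ is not right-c.e., so any computable sequence of safe reservations stalls at a strictly positive limit and $w_s$ never reaches $\alpha_e$; guessing an unsafe reservation instead eventually strands a simulation request that no longer fits, and it cannot be retracted without destroying optimality.

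The missing ingredient is precisely the Ku\v cera--Slaman test: from an index for $\alpha$ one builds a Martin-L\"of test $(\mathcal{V}_k)$ such that if $\alpha\notin\mathcal{V}_k$ one can, uniformly in $k$, produce increasing approximations with $\alpha_{s+1}-\alpha_s>2^{-k}(\Omega_{s+1}-\Omega_s)$. With this incremental domination your scheduling becomes trivial --- each new $V^*$-convergence is paid for by the very increment of $\alpha$ that accompanies it, and the leftover is junk --- and this is how the paper proceeds, combining the Ku\v cera--Slaman test with the Calude--Hertling--Khoussainov--Wang construction, plus a small enlargement of the test to keep $\alpha$ away from the endpoints of $[0,1]$. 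Your opening move (transferring $\alpha_e\notin\mathcal{U}_d$ to non-membership in a hand-built test via the effectively computed shift constant) is exactly right and matches the paper; the test just has to be the Ku\v cera--Slaman one rather than the value-comparison test.
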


\begin{proof}
This follows from work of Calude, Hertling, Khoussainov, and Wong~\cite{CaludeHertlingKhoussainovWang} and of Ku\v cera and Slaman~\cite{KuceraS2001}. Let $\Omega$ be the halting probability of an optimal machine. Ku\v cera and Slaman showed how from the index of a left-c.e.\ real $\alpha \in [0,1]$ one can build a Martin-L\"of test $(\mathcal{V}_k)$ such that if $\alpha \notin (\mathcal{V}_k)$ then one can, uniformly in~$k$, produce approximations $\alpha_1 < \alpha_2 < \ldots$ of $\alpha$ and $\Omega_1 < \Omega_2 < \ldots $ of $\Omega$ such that $(\alpha_{s+1} - \alpha_s) > 2^{-k} (\Omega_{s+1}-\Omega_s)$ (see~\cite[Theorem 9.2.3]{DowneyH2010}). Then, by~\cite{CaludeHertlingKhoussainovWang}, one can use such approximations to uniformly build a uniform machine with halting probability~$\alpha$, as long as $\alpha \in (2^{-k},1-2^{-k})$ (see~\cite[Theorem 9.2.2]{DowneyH2010})

Thus, given an index for $\alpha$, if $(\mathcal{V}_k)$ is the Martin-L\"of test built as in~\cite{KuceraS2001}, we can build the test $\mathcal{V}'_k  = \mathcal{V}_{k+2} \cup (0,2^{-k-2}) \cup (1-2^{-k-2},1)$ (whose index can uniformly be computed from that of $(\mathcal{V}_k)$). Now, if $\alpha \notin \mathcal{U}_d$, then we can compute a constant~$c$ such that $\alpha \notin  \mathcal{V}'_{d+c}$, and apply the above argument with $k=c+d+2$.
\end{proof}

\subsection{Uniform constructions of semi-measures}

Another way to define Omega numbers, which is equivalent if one is not concerned about uniformity issues, is  via left-c.e.\ semi-measures (see~\cite[Section 3.9]{DowneyH2010}).

\begin{definition}
	A semi-measure is a function $m: \mathbb{N} \rightarrow \mathbb{R}^+$ such that $\sum_i m(i) \leq 1$. It is left-c.e.\ if the set $\{(i,q) \mid i \in \mathbb{N}, q \in \mathbb{Q}, \ m(i)>q\}$ is c.e., or equivalently, if $m$ is the limit of a non-decreasing sequence $(m_s)$ of uniformly computable functions such that $\sum_s m_s(i) \leq 1$ for all~$s$.
\end{definition}

There exist universal left-c.e.\ semi-measures, i.e.,  left-c.e.\ semi-measures $m$ such that for any other left-c.e.\ semi-measure $\mu$, there is a $c>0$ such that $m(i)>c \cdot \mu(i)$ for all~$i$. The Levin coding theorem (see~\cite[Theorem 3.9.4]{DowneyH2010}) asserts that a left-c.e.\ semi-measure~$m$ is universal if and only if there are positive constants $c_1,c_2$ such that $c_1 \cdot 2^{-K(i)} < m(i) < c_2 \cdot 2^{-K(i)}$ for all~$i$. An important result from Calude, Hertling, Khoussainov, and Wang~\cite{CaludeHertlingKhoussainovWang} is that a left-c.e.\ real $\alpha$ is an Omega number if and only if it is the sum $\sum_i m(i)$ for some universal left-c.e.\ semi-measure~$m$. Interestingly, with this representation of Omega numbers, uniform constructions are possible.

\begin{theorem}
	There is a total computable function $f$ such that if $e$ is an index for a random left-c.e.\ real $\alpha \in [0,1]$, then $f(e)$ is defined and is an index for a universal left-c.e.\ semi-measure $m_{f(e)}$ with sum $\alpha$.
\end{theorem}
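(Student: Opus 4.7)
The plan is to combine the Ku\v{c}era--Slaman and Calude--Hertling--Khoussainov--Wang constructions used in the preceding theorem by dovetailing over the scaling parameter~$k$ and inserting a designated ``dump'' index to enforce the correct total sum unconditionally. The uniformity obstruction that arose for machines does not arise for semi-measures, precisely because we can always deposit arbitrary leftover mass onto a single index without disturbing any of the constraints a semi-measure must satisfy.

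Fix once and for all a universal left-c.e.\ semi-measure $M$, with approximations $M_s \uparrow M$ and $\Omega_s \uparrow \Omega = \sum_i M(i)$. From the index~$e$, the Ku\v{c}era--Slaman argument uniformly produces a Martin-L\"of test $(\mathcal{V}_k)_k$ such that if $\alpha_e \notin \mathcal{V}_k$, one can uniformly in~$k$ compute approximations $\alpha_s \uparrow \alpha$ and $\Omega_s \uparrow \Omega$ satisfying $\alpha_{s+1}-\alpha_s > 2^{-k}(\Omega_{s+1}-\Omega_s)$ at every stage. As in the previous proof, let $\mathcal{V}'_k = \mathcal{V}_{k+2}\cup(0,2^{-k-2})\cup(1-2^{-k-2},1)$. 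For each~$k$ I would then run the combined KS+CHKW construction, which from these approximations mirrors $2^{-k}M$'s increments into a semi-measure whose partial sums track $\alpha$, modified so that at every stage~$s$ whatever portion of the increment $\alpha_s-\alpha_{s-1}$ is not consumed by the CHKW procedure (because $\alpha$ momentarily moves faster than $2^{-k}\Omega$, or because the construction is stalled by a failure of the KS hypothesis at this stage) is simply added to $m^{(k)}(i_0)$ for a fixed dump index~$i_0$. Call the resulting left-c.e.\ semi-measure $m^{(k)}$. By construction, $\sum_i m^{(k)}(i) = \alpha$ unconditionally, and whenever $\alpha \notin \mathcal{V}'_k$, the CHKW portion succeeds throughout the construction, yielding $m^{(k)}(i) \geq 2^{-k}M(i)$ for all~$i$, so $m^{(k)}$ is universal.

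Finally, set $m_{f(e)}(i) = \sum_{k \geq 0} 2^{-k-1} m^{(k)}(i)$; since each $m^{(k)}$ is uniformly left-c.e.\ in~$e$ and~$k$, this is uniformly a left-c.e.\ semi-measure, so its index $f(e)$ is total computable in~$e$. Its sum is $\sum_k 2^{-k-1}\alpha = \alpha$. If $\alpha$ is Martin-L\"of random and in $[0,1]$, then $\alpha \notin \mathcal{V}'_k$ for all sufficiently large~$k$, so $m^{(k)}$ is universal for some~$k$, and since $m_{f(e)} \geq 2^{-k-1}\cdot m^{(k)}$, the semi-measure $m_{f(e)}$ is then itself universal. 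The main point to check is that the dump safeguard does not damage the CHKW bookkeeping whenever the KS hypothesis holds: this is immediate because the safeguard only deposits mass on the single auxiliary index $i_0$, leaving the pointwise lower bounds $m^{(k)}(i) \geq 2^{-k}M(i)$ on all other indices intact while guaranteeing the exact sum $\alpha$ at each stage regardless of whether CHKW is currently making progress.
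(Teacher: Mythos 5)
Your proposal is correct and follows essentially the same strategy as the paper's proof: dovetail over the test level $k$, arrange that each component semi-measure has a controlled exact sum by depositing every increment of $\alpha$ somewhere (your dedicated dump index $i_0$ plays the role of the currently-processed index in the paper's construction), obtain domination of a scaled copy of the fixed universal semi-measure whenever $\alpha$ escapes level $k$ of a Martin-L\"of test, and take a weighted sum over $k$. The only difference is cosmetic: the paper builds its auxiliary test $(\mathcal{U}_k)$ inline from the construction itself rather than importing the Ku\v{c}era--Slaman test as a black box.
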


\begin{proof}
	Let $\mu$ be a fixed universal semi-measure and $\gamma \leq 1$ its sum. Suppose we are given (the index of) a left-c.e.\ real $\alpha$. We build our $m$ by building uniformly, for each~$k>0$, a left-c.e.\ semi-measure $m_k$ of halting probability $\alpha \cdot 2^{-k}$ and will take $m = \sum_{k>0} m_k$. While doing so, we also build an auxiliary Martin-L\"of test $(\mathcal{U}_k)_{k>0}$.
	
	The measure $m_k$ is designed as follows. We monitor the semi-measure $\mu$ and $\alpha$ at the same time and run the following algorithm
	\begin{enumerate}
		\item[1.] Let $s_0$ be the stage at which we entered step 1. Wait for the least stage $s \geq s_0$ such that some value $\mu(i)$ with $i \leq s$ has increased since the last $i$-stage. If there is more than one such $i$ at stage $s$, let $i$ be the one whose most recent $i$-stage is least. Let $x$ be the amount by which $\mu(i)$ has increased since the previous $i$-stage, and say that $s$ is an $i$-stage. Move to step 2.
		\item[2.] Put $(\alpha_s,\alpha_s+2^{-k}x)$ into $\mathcal{U}_k$. Move to step 3.
		\item[3.] Increase $m_k(i)$ by $2^{-k}(\alpha_s - \alpha_{s_0})$. At further stages $t \geq s$, when we see an increase $\alpha_{t+1}>\alpha_t$, we increase $m_k(i)$ by $2^{-k} (\alpha_{t+1}-\alpha_t)$. Moreover, if we now have $\alpha_{t+1} > \alpha_s+2^{-k}x$, we go back to step 1, otherwise we stay in this step 3.
	\end{enumerate}
	
	By construction we do have $\sum_i m_k(i) = 2^{-k}\alpha$. Still by construction, the measure of $\mathcal{U}_k$ is bounded by $\gamma \cdot 2^{-k} \leq 2^{-k}$, so it is indeed a Martin-L\"of test. Thus, if $\alpha$ is indeed random, there is a~$j$ such that $\alpha \notin \mathcal{U}_j$. Looking at the above algorithm, $\alpha \notin \mathcal{U}_j$ means that for this~$j$, we enter step~1 of the algorithm infinitely often and thus whenever some $\mu(i)$ is increased by~$x$ at step 1, this is met by a sum of increases of $m_j(i)$ by strictly more than $2^{-j}x$ during step 3. Thus, $m_j > 2^{-j}\mu$, which makes $m_j$ a universal semi-measure, and thus $m>m_j$ is universal.
\end{proof}

An interesting corollary is that one cannot uniformly turn a universal left-c.e.\ semi-measure~$m$ into a prefix-free machine whose halting probability is $\sum_i m(i)$. Indeed, if we could, then we could uniformly turn a random left-c.e.\ $\alpha \in [0,1]$ into a prefix-free machine of halting probability $\alpha$ by first applying the above theorem to get a universal left-c.e.\ semi-measure~$m$ of sum $\alpha$, and then we could turn $m$ into a machine~$M$ of sum $\alpha$. This would contradict Theorem~\ref{no-uniform-machine-constr}.

To summarize, for arbitrary (not necessarily random) left-c.e.\ reals, we can make all of the transformations uniformly:
\[
\begin{tikzpicture}
\tikzstyle{every rectangle node}=[draw,  text width=3cm, minimum height = 1cm, align = center]
\node (ce) at (0,0) [rectangle] {left-c.e.\ real};
\node (measure) at (6,0) [rectangle] {left-c.e.\ semi-measure};
\node (machine) at (3,-3) [rectangle] {prefix-free machine};

\draw [implies-implies,double equal sign distance] (ce) -- (measure);
\draw [implies-implies,double equal sign distance] (machine) -- (measure);
\draw [implies-implies,double equal sign distance] (machine) -- (ce);

\end{tikzpicture}\]

For random left-c.e.\ reals, and optimal prefix-free machines, we can only make the following transformations uniformly:
\[
\begin{tikzpicture}
\tikzstyle{every rectangle node}=[draw,  text width=3cm, minimum height = 1cm, align = center]
\node (ce) at (0,0) [rectangle] {random left-c.e.\ real};
\node (measure) at (6,0) [rectangle] {universal left-c.e.\ semi-measure};
\node (machine) at (3,-3) [rectangle] {optimal prefix-free machine};

\draw [implies-implies,double equal sign distance] (ce) -- (measure);
\draw [-implies,double equal sign distance] (machine) -- (measure);
\draw [-implies,double equal sign distance] (machine) -- (ce);

\end{tikzpicture}\]

\section{Differences of Left-c.e.\ Reals}

{
\renewcommand{\thetheorem}{\ref{thm:no-unif-diff}}
\begin{theorem}
	There is no partial computable function $f$ such that if $e$ is an index for a non-computable left-c.e.\ real $\alpha$, then $f(e)$ is defined and is an index for a left-c.e.\ real $\beta$ such that $\alpha - \beta$ is neither left-c.e.\ nor right-c.e.
\end{theorem}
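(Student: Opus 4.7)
The plan is to mirror the recursion-theoretic approach of Theorem~\ref{no-uniform-machine-constr}. Assume for contradiction that $f$ exists as in the statement, and assume $f$ is total (by the usual waiting trick). Fix a non-computable left-c.e.\ real $\gamma$, for instance $\gamma = \Omega_U$ for some universal machine $U$. Using the recursion theorem, the goal is to construct an index $e$ for a left-c.e.\ real $\alpha_e$ such that $\alpha_e - \alpha_{f(e)}$ equals $\gamma$---a left-c.e.\ real---contradicting the assumption that this difference is neither left-c.e.\ nor right-c.e.

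Concretely, the first attempt is to apply the recursion theorem to obtain $e$ with $\alpha_e := \alpha_{f(e)} + \gamma$, meaning that approximations satisfy $\alpha_{e,s} = \alpha_{f(e),s} + \gamma_s$. Then $\alpha_e$ is left-c.e.\ (as a sum of left-c.e.\ reals), and it is non-computable: if it were computable, then $\gamma = \alpha_e - \alpha_{f(e)}$ would equal a computable real minus a left-c.e.\ real, hence right-c.e.; being both left-c.e.\ and right-c.e., $\gamma$ would then be computable, contradicting its choice. Thus, provided $\alpha_e$ is a (finite) real, $e$ is an index for a non-computable left-c.e.\ real, the hypothesis on $f$ applies, and we derive a contradiction from $\alpha_e - \alpha_{f(e)} = \gamma$ being left-c.e.

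The main technical obstacle is guaranteeing that $\alpha_e$ is actually finite: since $\alpha_e = \alpha_{f(e)} + \gamma$, finiteness of $\alpha_e$ reduces to finiteness of $\alpha_{f(e)}$, which the hypothesis on $f$ only supplies under the assumption that $\alpha_e$ is already a non-computable left-c.e.\ real---a potential circularity. To avoid this, I would cap the construction in advance, using for instance $\alpha_{e,s} := \min(\alpha_{f(e),s}, 1) + \gamma_s$, so that $\alpha_e \in [0, 2]$ is always a real and, by the same non-computability argument, unconditionally non-computable. The hypothesis on $f$ then forces $\beta := \alpha_{f(e)}$ to be a genuine left-c.e.\ real; in the sub-case $\beta \leq 1$ one directly reads off $\alpha_e - \beta = \gamma$ and the argument concludes. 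The hardest sub-case is $\beta > 1$, where $\alpha_e = 1 + \gamma$ and $\beta$ can be any left-c.e.\ real exceeding $1$; I would address this by splitting the construction into two phases, switching---once the approximation to $\beta$ first exceeds $1$---to a phase in which $\alpha_{e,s}$'s stagewise increment is at most $\beta_s$'s increment, making $\alpha_{e,s} - \beta_s$ non-increasing so that $\alpha_e - \beta$ is right-c.e. The non-computability of $\alpha_e$ in this second phase transfers from the non-computability of $\beta$, which is itself forced by the hypothesis on $f$ (a computable $\beta$ would render $\alpha_e - \beta$ automatically left-c.e., already contradicting the hypothesis). Combining the two phases in a single monotone construction, while ensuring that the switch preserves both boundedness and non-computability, is the main technical challenge I foresee.
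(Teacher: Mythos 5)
Your overall architecture (recursion theorem to build $\alpha = \alpha_e$ while watching $\beta = \alpha_{f(e)}$, with a cap to keep $\alpha_e$ a genuine real) matches the paper's setup, and your Case 1 observation---that if the opponent's $\beta$ stays below the cap then setting $\alpha_e = \beta + \gamma$ forces $\alpha_e - \beta = \gamma$ to be left-c.e.---is correct and genuinely slicker than anything in the paper for that sub-case. The problem is that the opponent controls which case occurs, and your treatment of the remaining case contains a circularity that I do not think can be patched without essentially rebuilding the paper's construction.

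Concretely: once $\beta_s$ exceeds the cap and you enter your second phase, you tie the increments of $\alpha_{e,s}$ to those of $\beta_s$ so that $\alpha_e - \beta$ is right-c.e. But if $\beta$ is computable from that point on (say the opponent sets $\beta_s = 2$ for all $s$ and never moves again), then $\alpha_e$ is computable too: to approximate $\alpha_e$ within $\epsilon$, find $s$ with $\beta - \beta_s < \epsilon$ and note that $\alpha_e - \alpha_{e,s} \leq \beta - \beta_s$. Your claimed escape---that a computable $\beta$ would make $\alpha_e - \beta$ left-c.e., "already contradicting the hypothesis"---does not apply, because the hypothesis on $f$ is conditional on $e$ being an index for a \emph{non-computable} left-c.e.\ real; if $\alpha_e$ is computable, $f$ owes you nothing and no contradiction arises. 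So the non-computability of $\alpha_e$ must be secured unconditionally, yet in phase 2 you have surrendered all independent control of $\alpha_e$'s growth. This tension---you must sometimes push $\alpha$ up independently of $\beta$ to diagonalize, but every such push threatens the right-c.e.-ness of $\alpha - \beta$---is the real content of the theorem. The paper resolves it by diagonalizing only against right-c.e.\ reals $\theta^i$ (which suffices since $\alpha$ is left-c.e.), acting only when $\theta^i_s$ comes within a budget $\frac{1}{2^{i+2}}\left[\delta_s - (\alpha_s - \beta_s)\right]$ of $\alpha_s$ from above, where $\delta$ is an auxiliary right-c.e.\ real built to equal $\alpha - \beta$ in the infinitary outcome, and by switching into a \textbf{follow} mode in which $\alpha$ tracks $\beta$'s increments so that $\alpha - \beta$ comes out left-c.e.\ if the switch is permanent. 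As written, an opponent who jumps past the cap and then freezes $\beta$ defeats your construction; you would need something like the paper's budgeted diagonalization machinery in phase 2.
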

}
\addtocounter{theorem}{-1}
\begin{proof}
Using the recursion theorem, define a left-c.e.\ real $\alpha$ while watching the left-c.e.\ real $\beta$ produced from $\alpha$ by a function $f$. We will also define a right-c.e.\ real $\delta$. Let $\theta^i$ be an enumeration of the right-c.e.\ reals with right-c.e.\ approximations $(\theta^i_s)$. We will ensure that $\alpha \neq \theta^i$ for any $i$, so that $\alpha$ is non-computable, and that either $\alpha - \beta = \delta$ or for all sufficiently large stages, $\alpha$ grows more than $\beta$ (and so $\alpha - \beta$ is left-c.e.).

Each stage of the construction will be in one of infinitely many possible states: \textbf{wait} and \textbf{follow$(i)$} for some $i$. In \textbf{wait}, $\alpha$ will be held to the same value and we will begin decreasing the right-c.e.\ real $\delta$ closer to $\alpha - \beta$; if there are infinitely many \textbf{wait} stages, then in fact we will have $\delta = \alpha - \beta$. At \textbf{follow$(i)$} stages, $\alpha$ will increase as much as $\beta$, and possibly more, in an attempt to have $\theta^i < \alpha$. Because $\alpha$ will be increasing as much as and possibly more than $\beta$, if from some point on all stages are  \textbf{follow$(i)$} stages, then $\alpha - \beta$ will be left-c.e. We will only enter \textbf{follow$(i)$} when we have a reasonable chance of making $\theta^i < \alpha$, i.e., when $\theta^i$ is not too much greater than $\alpha$, and we will only exit \textbf{follow$(i)$} when we have succeeded in making $\theta^i < \alpha$. Since $\theta^i$ is right-c.e.\ and $\alpha$ is left-c.e., this can never be injured. It is possible that we will never succeed in making $\theta^i < \alpha$ (because in fact $\theta^i > \alpha$) but in this case we will still ensure that $\alpha$ is not computable and make $\alpha - \beta$ left-c.e. We just have to make sure that we never increase $\alpha - \beta$ above $\delta$.

\medskip{}

Note that technically when defining $\alpha_s$ we cannot wait for $\beta_s$ to converge. But we can do this by essentially the following argument. First, fix a non-computable left-c.e.\ real $\gamma$ and let $\alpha_s = \gamma_s$ until the uniform procedure provides us with a $\beta$ and $\beta_0$ converges at some stage $s_0$. Then we can restart the construction, considering the construction to begin with $\alpha_0 = \gamma_{s_0}$. We can also in a uniform way replace the given approximation to $\beta$ (which might not even be total or left-c.e.) by a different one which is guaranteed to be left-c.e.\ and which converges in a known amount of time, and is equal to $\beta$ in the case that $\beta$ is in fact left-c.e.

\medskip{}

\noindent \textit{Construction}.

\medskip{}

\noindent Stage $s = 0$. Begin with $\alpha_0 = \gamma_{s_0}$, $\delta_0 = 1 + \alpha_0 - \beta_0$. Say that stage 1 will be a \textbf{wait} stage.

\medskip{}

\noindent Stage $s+1$. We will have determined in stage $s$ whether stage $s+1$ is a \textbf{wait} or \textbf{follow} stage.

\medskip{}

\noindent \textbf{wait}: Let $\alpha_{s+1} = \alpha_s$ and
\[\delta_{s+1} = \min\left(\delta_s,\alpha_{s+1}-\beta_{s+1} + \frac{1}{2^s}\right).\]
Check whether, for some $i \leq s$, $\theta^i_{s+1} \geq \alpha_{s+1}$ and $\theta^i_{s+1} - \alpha_{s+1} < \frac{1}{2^i}$. If we find such an $i$, let $i$ be the least such. The next stage is a \textbf{follow$(i)$} stage. If there is no such $i$, the next stage is a \textbf{wait} stage.

\medskip{}

\noindent \textbf{follow$(i)$}: In all cases, let $\delta_{s+1} = \delta_s$. Then:
\begin{enumerate}
	\item Check whether
\[ \alpha_s + \beta_{s+1} - \beta_s > \theta^i_{s+1}.\]
If so, set
\[ \alpha_{s+1} = \theta^i_{s+1} + \epsilon \leq \alpha_s + \beta_{s+1} - \beta_s \]
where $\epsilon < \frac{1}{2^i}$. The next stage is a \textbf{wait} stage.

\item Otherwise, check whether for some $j$,
\[ 0 \leq \theta^j_s - \alpha_s < \frac{1}{2^{j+2}} \Big[ \delta_s - (\alpha_s-\beta_s) \Big]. \]
If we find such a $j$, choose the least such $j$, and let $\epsilon > 0$ be such that
\[ \theta^j_s + \epsilon - \alpha_s < \frac{1}{2^{j+2}} \Big[ \delta_s - (\alpha_s-\beta_s) \Big]. \]
Let
\[\alpha_{s+1} = \max(\theta^j_s + \epsilon,\alpha_s + \beta_{s+1}-\beta_s).\]
If $j = i$, the next stage is a \textbf{wait} stage. Otherwise, the next stage is a \textbf{follow$(i)$} stage.

\item Finally, in any other case, let
\[\alpha_{s+1} = \alpha_s + \beta_{s+1}-\beta_s.\]
The next stage is a \textbf{follow$(i)$} stage.
\end{enumerate}

\medskip{}

\noindent \textit{End construction.}

\medskip{}

The verification will consist of five claims followed by a short argument.

\begin{claim}\label{cl:1}
	$\alpha = \sup \alpha_s$ comes to a limit.
\end{claim}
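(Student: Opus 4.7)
The plan is to show that $\alpha_s$ is monotonically non-decreasing (so that $\sup_s \alpha_s$ exists in $[0,\infty]$) and then bound it above by a finite constant. Monotonicity is immediate by inspection of the construction: at a \textbf{wait} stage $\alpha_{s+1}=\alpha_s$; in Case 3 of \textbf{follow}$(i)$, $\alpha_{s+1}=\alpha_s+(\beta_{s+1}-\beta_s)\geq \alpha_s$ since $\beta_s$ is non-decreasing; in Case 2 we set $\alpha_{s+1}\geq \alpha_s+(\beta_{s+1}-\beta_s)$; and in Case 1 the value $\theta^i_{s+1}+\epsilon$ is chosen with $\epsilon$ large enough to ensure $\alpha_{s+1}\geq \alpha_s$.

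For the upper bound, I would decompose the per-stage increment $\alpha_{s+1}-\alpha_s$ into a \emph{tracking} part of size at most $\beta_{s+1}-\beta_s$ (present in all \textbf{follow} cases and absent at \textbf{wait} stages) and an \emph{extra} part that is nonzero only in Case 2 of \textbf{follow}. The tracking contributions telescope to at most $\beta-\beta_{s_0}$, which is finite because $\beta$ is a (bounded) left-c.e.\ real.

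The main step -- and the place I expect the real work to be -- is bounding the extra contributions from Case 2. The key observation is that Case 2 of \textbf{follow}$(i)$ with witness $j$ can fire \emph{at most once} across the entire construction: after such a firing at stage $s+1$ we have $\alpha_{s+1}>\theta^j_s\geq \theta^j_t$ for all $t\geq s$ (since $\theta^j$ is right-c.e., so $\theta^j_t$ is non-increasing, while $\alpha_t$ is non-decreasing), which permanently falsifies the guard $\theta^j_t-\alpha_t\geq 0$. Each firing contributes at most $2^{-(j+2)}\bigl[\delta_s-(\alpha_s-\beta_s)\bigr]$ of extra growth.

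Thus, provided one can show that $\delta_s-(\alpha_s-\beta_s)$ is bounded by some constant $C$ throughout the construction, the total extra contribution is at most $C\sum_{j\geq 0}2^{-(j+2)}=C/2$, giving $\alpha\leq \alpha_0+(\beta-\beta_{s_0})+C/2<\infty$. The bound on $\delta_s-(\alpha_s-\beta_s)$ is the technical nuisance: this quantity starts at $1$, is preserved in Case 3 of \textbf{follow}, decreases in Case 2 of \textbf{follow}, is capped at $2^{-s}$ at \textbf{wait} stages by the very definition of $\delta_{s+1}$, and in Case 1 of \textbf{follow} it can grow by at most $\beta_{s+1}-\beta_s$ but the next stage is mandatorily a \textbf{wait} stage which resets it. A short induction yields $C\leq 2$, finishing the proof.
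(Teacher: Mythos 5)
Your proof is correct, but it is organized quite differently from the paper's. The paper bounds the growth of $\alpha$ \emph{per \textbf{follow}$(i)$ block}: each $i$ is entered at most once (exiting requires pushing $\alpha$ permanently above $\theta^i$, after which the entry guard $\theta^i_{s+1}\geq\alpha_{s+1}$ fails forever), and within a block the entry condition $\theta^i_{s+1}-\alpha_{s+1}<2^{-i}$ together with the fact that the right-c.e.\ approximation $\theta^i_s$ is non-increasing makes $\theta^i$ a ceiling that $\alpha$ tracks from below until the exit overshoot of less than $2^{-i}$; this gives at most $2\cdot 2^{-i}$ per block and hence $\alpha\leq\sum_i 2\cdot 2^{-i}$. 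That bound is independent of both $\beta$ and $\delta$. Your per-stage decomposition into a tracking part (telescoping to $\beta-\beta_{s_0}$) and one-shot Case 2 extras (each $j$ firing once, as you correctly argue) instead leans on two things the paper's argument avoids: the convergence of $\beta$ — which holds only because of the preliminary normalization replacing the opponent's approximation by one guaranteed to be a genuine left-c.e.\ real, so you should cite that explicitly — and an upper bound on $\delta_s-(\alpha_s-\beta_s)$, a small auxiliary lemma the paper never needs (elsewhere it only uses a \emph{lower} bound on that quantity, in Claims \ref{cl:2} and \ref{cl:4}). Your sketch of that bound is right: the quantity starts at $1$, is non-increasing at Case 2 and Case 3 stages, is capped at $2^{-s}$ at \textbf{wait} stages, and the only stages where it can inflate (Case 1) are immediately followed by a \textbf{wait} stage before Case 2 can consult it again. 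In exchange for the extra bookkeeping, your route has the virtue of accounting explicitly for the Case 2 jumps with $j\neq i$ inside a \textbf{follow}$(i)$ block, which the paper's one-line per-block estimate treats only implicitly.
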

\begin{proof}
	In \textbf{wait} stages, we do not increase $\alpha$. If we enter \textbf{follow$(i)$}, then we can increase $\alpha$ by at most $\frac{2}{2^i}$ before we exit \textbf{follow$(i)$}. Thus
	\[ \alpha \leq \sum_{i \in \omega} \frac{2}{2^i} < \infty.\qedhere\]
\end{proof}

\begin{claim}\label{cl:2}
	Suppose that, from some stage $t$ on, every stage is a \textbf{follow$(i)$} stage. Then:
	\begin{enumerate}
		\item for all $s \geq t$, $\alpha_{s+1} - \alpha_s \geq \beta_{s+1} - \beta_s$,
		\item $\alpha - \beta$ is left-c.e.,
		\item for all $s \geq t$,
		\[\delta_s - (\alpha_s-\beta_s) \geq \frac{1}{2}\Big[\delta_t - (\alpha_t-\beta_t)\Big].\]
	\end{enumerate}
\end{claim}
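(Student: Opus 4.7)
The plan is to exploit the fact that, under the hypothesis that every stage after $t$ remains in \textbf{follow$(i)$}, two of the sub-cases of the \textbf{follow$(i)$} branch are forbidden: sub-case 1 transitions immediately to a \textbf{wait} stage, as does sub-case 2 whenever the least witness is $j = i$. Thus for every $s \geq t$, stage $s$ either executes sub-case 3, or sub-case 2 with least witness $j \neq i$.

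From this, part (1) is immediate: sub-case 3 gives $\alpha_{s+1} = \alpha_s + \beta_{s+1} - \beta_s$, while sub-case 2 assigns $\alpha_{s+1} = \max(\theta^j_s + \epsilon, \alpha_s + \beta_{s+1} - \beta_s) \geq \alpha_s + \beta_{s+1} - \beta_s$. Part (2) then follows, since (1) says that $s \mapsto \alpha_s - \beta_s$ is non-decreasing for $s \geq t$, so $\sup_s(\alpha_s - \beta_s)$ equals $\alpha - \beta$, and the running maxima $\max_{r \leq s}(\alpha_r - \beta_r)$ form a non-decreasing computable rational approximation witnessing that $\alpha - \beta$ is left-c.e.

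For part (3), set $D_s := \delta_s - (\alpha_s - \beta_s)$ and note that $\delta$ is frozen throughout follow stages, so changes in $D_s$ record only changes in $\alpha - \beta$. Sub-case 3 leaves $\alpha - \beta$ unchanged. In sub-case 2, if the max is realized by $\alpha_s + \beta_{s+1} - \beta_s$ then $\alpha - \beta$ is again unchanged; otherwise $\alpha_{s+1} = \theta^j_s + \epsilon$ and the selection criterion, combined with $\beta_{s+1} \geq \beta_s$, yields $(\alpha_{s+1} - \beta_{s+1}) - (\alpha_s - \beta_s) \leq \alpha_{s+1} - \alpha_s < 2^{-(j+2)} D_s$, so that $D_{s+1} \geq (1 - 2^{-(j+2)}) D_s$ in this case.

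The main obstacle, and the key to closing the argument, is showing that each index $j$ triggers this loss at most once throughout the construction. The point is that once $\alpha_{s+1} \geq \theta^j_s + \epsilon$ holds at some stage, the right-c.e.\ monotonicity of $\theta^j$ and the left-c.e.\ monotonicity of $\alpha$ force $\theta^j_{s'} \leq \theta^j_s < \theta^j_s + \epsilon \leq \alpha_{s'}$ for all later $s'$, so the selection condition $\theta^j_{s'} - \alpha_{s'} \geq 0$ is permanently violated for this $j$. Multiplying the losses over all $j$ that ever trigger sub-case 2, and recalling that these all satisfy $j \neq i$, gives $D_s \geq D_t \prod_{j \neq i}(1 - 2^{-(j+2)}) \geq D_t \bigl(1 - \sum_{j \geq 0} 2^{-(j+2)}\bigr) = \tfrac{1}{2} D_t$, which is exactly (3).
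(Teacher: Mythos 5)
Your argument is correct and follows essentially the same route as the paper's: part (1) by inspecting the surviving sub-cases, part (3) by tracking $D_s=\delta_s-(\alpha_s-\beta_s)$, observing that each $j$ can cause a loss of at most a factor $(1-2^{-(j+2)})$ and only once (by the right-c.e./left-c.e.\ monotonicity argument), and bounding the product below by $\tfrac12$. In fact you are slightly more careful than the paper in separating the sub-case of (2) where the maximum is realized by $\alpha_s+\beta_{s+1}-\beta_s$. One small slip in part (2): the running maxima $\max_{r\le s}(\alpha_r-\beta_r)$ need not converge to $\alpha-\beta$, since before stage $t$ the quantity $\alpha_r-\beta_r$ can decrease (during \textbf{wait} stages $\alpha$ is frozen while $\beta$ may grow), so an early value could exceed the limit; the correct witness is simply the tail sequence $(\alpha_s-\beta_s)_{s\ge t}$, which by (1) is non-decreasing and converges to $\alpha-\beta$.
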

\begin{proof}
	(1) follows from the fact that we either set the next stage to be a \textbf{wait} stage, or we have $\alpha_{s+1} \geq \alpha_s + \beta_{s+1} - \beta_s$. (2) follows easily from (1).
	
	For (3), since $\delta_s = \delta_t$ for all $s \geq t$, whenever we define
	\[\alpha_{s+1} = \alpha_s + \beta_{s+1}-\beta_s \]
	we maintain
	\[ \delta_{s+1} - (\alpha_{s+1}-\beta_{s+1}) = \delta_s - (\alpha_s-\beta_s).\]
	The other possible case is when we find $j$ such that
	\[ 0 \leq \theta^j_s - \alpha_s < \frac{1}{2^{j+2}} \Big[ \delta_s - (\alpha_s-\beta_s) \Big]. \]
	and define
	\[\alpha_{s+1} = \theta^j_s + \epsilon.\]
	Note that in this case we permanently have $\theta^j - \alpha < 0$ so we can never do this again for the same $j$. We have
	\begin{align*}
	\delta_{s+1} - (\alpha_{s+1}-\beta_{s+1}) &= \delta_s - \theta^j_s - \epsilon + \beta_{s+1}\\
	&\geq \delta_s - \alpha_s + \beta_s - \frac{1}{2^{j+2}}\Big[ \delta_s - (\alpha_s-\beta_s) \Big]\\
	&= \frac{2^{j+2} - 1}{2^{j+2}} \Big[ \delta_s - (\alpha_s-\beta_s) \Big].
	\end{align*}
	Thus, for all stages $s \geq t$,
	\begin{align*}
		\delta_s - (\alpha_s-\beta_s) &\geq \prod_{j \in \omega} \frac{2^{j+2}-1}{2^{j+2}} \Big[\delta_t - (\alpha_t-\beta_t)\Big] \\
		&\geq	\frac{1}{2} \Big[\delta_t - (\alpha_t-\beta_t)\Big].\qedhere
	\end{align*}
\end{proof}

\begin{claim}\label{cl:4}
	For all stages $s$, $\delta_s > \alpha_s - \beta_s$.
\end{claim}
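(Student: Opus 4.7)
The plan is a straightforward induction on $s$. The base case is immediate: $\delta_0 = 1 + \alpha_0 - \beta_0 > \alpha_0 - \beta_0$. For the inductive step, assume $\delta_s > \alpha_s - \beta_s$ and split according to which case of the construction determined $\alpha_{s+1}$ and $\delta_{s+1}$.

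In a \textbf{wait} stage, $\alpha_{s+1} = \alpha_s$ and $\beta_{s+1} \geq \beta_s$, so $\alpha_{s+1} - \beta_{s+1} \leq \alpha_s - \beta_s < \delta_s$ by induction; moreover the second argument of the minimum defining $\delta_{s+1}$ is $\alpha_{s+1} - \beta_{s+1} + 2^{-s}$, which strictly exceeds $\alpha_{s+1} - \beta_{s+1}$. Whichever term realizes the minimum, $\delta_{s+1} > \alpha_{s+1} - \beta_{s+1}$. In a \textbf{follow$(i)$} stage we always have $\delta_{s+1} = \delta_s$. In sub-cases (1) and (3), and in the branch of sub-case (2) where $\alpha_{s+1} = \alpha_s + \beta_{s+1} - \beta_s$, we get $\alpha_{s+1} - \beta_{s+1} \leq \alpha_s - \beta_s < \delta_s = \delta_{s+1}$ directly from the inductive hypothesis (in sub-case (1) we use the bound $\alpha_{s+1} = \theta^i_{s+1} + \epsilon \leq \alpha_s + \beta_{s+1} - \beta_s$ built into the instruction).

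The one place real work is needed is the branch of sub-case (2) where $\alpha_{s+1} = \theta^j_s + \epsilon$. Here the key is to exploit the defining inequality $\theta^j_s + \epsilon - \alpha_s < \tfrac{1}{2^{j+2}}\bigl[\delta_s - (\alpha_s - \beta_s)\bigr]$. Writing $\alpha_{s+1} - \beta_{s+1} = (\alpha_s - \beta_{s+1}) + (\alpha_{s+1} - \alpha_s)$ and using $\beta_{s+1} \geq \beta_s$, this is bounded above by
\[
(\alpha_s - \beta_s) + \tfrac{1}{4}\bigl[\delta_s - (\alpha_s - \beta_s)\bigr] = \tfrac{3}{4}(\alpha_s - \beta_s) + \tfrac{1}{4}\delta_s,
\]
which is a convex combination of two numbers both $\leq \delta_s$, with one of them (namely $\alpha_s - \beta_s$) strictly less than $\delta_s$ by the inductive hypothesis. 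Hence $\alpha_{s+1} - \beta_{s+1} < \delta_s = \delta_{s+1}$, closing the induction.

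I do not expect any substantive obstacle: the $2^{-(j+2)}$ factor in the test defining sub-case (2) appears to have been chosen precisely to absorb the loss incurred by jumping $\alpha$ past $\theta^j_s$, leaving a comfortable margin above $\alpha_s - \beta_s$. The entire verification is a routine case check keyed to this single numerical estimate.
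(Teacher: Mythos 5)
Your proof is correct and follows essentially the same inductive case analysis as the paper's; the only cosmetic difference is that you bound $\tfrac{1}{2^{j+2}}$ by $\tfrac14$ and phrase the final estimate as a convex combination, whereas the paper keeps the factor $\tfrac{1}{2^{j+2}}$ and writes $\alpha_{s+1}-\beta_{s+1} < \tfrac{1}{2^{j+2}}\delta_s + \tfrac{2^{j+2}-1}{2^{j+2}}(\alpha_s-\beta_s) < \delta_s$. Both arguments are the same in substance.
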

\begin{proof}
	We argue by induction. This is true for $s = 0$.
	
	If stage $s+1$ is a \textbf{wait} stage, then there are two possible values for $\delta_{s+1}$: $\delta_s$ or $\alpha_{s+1} - \beta_{s+1} + \frac{1}{2^s}$. It is clear that the second is strictly greater than $\alpha_{s+1} - \beta_{s+1}$. We also have, since $\alpha_{s+1}=\alpha_s$ and $\beta_{s+1} \geq \beta_s$, that $\delta_s > \alpha_s - \beta_s \geq \alpha_{s+1} - \beta_{s+1}$.
	
	If stage $s+1$ is a \textbf{follow} stage, then $\delta_{s+1} = \delta_s$. There are two options for $\alpha_{s+1}$. First, we might set $\alpha_{s+1} \leq \alpha_s + \beta_{s+1} - \beta_s$ so that $\alpha_{s+1} - \beta_{s+1} \leq \alpha_s - \beta_s$ and $\delta_{s+1} > \alpha_{s+1} - \beta_{s+1}$ follows from the induction hypothesis $\delta_s > \alpha_s-\beta_s$. Second, we might set
	\[\alpha_{s+1} = \theta^j_s + \epsilon.\]
	where
	\[ \theta^j_s + \epsilon - \alpha_s < \frac{1}{2^{j+2}} \Big[ \delta_s - (\alpha_s-\beta_s) \Big].\]
	Then
	\begin{align*}
	\alpha_{s+1} - \beta_{s+1} &\leq \theta^j_s + \epsilon - \beta_s\\
	 &< \alpha_s - \beta_s + \frac{1}{2^{j+2}} \Big[ \delta_s - (\alpha_s-\beta_s) \Big]\\
	 &\leq \frac{1}{2^{j+2}} \delta_s + \frac{2^{j+2} - 1}{2^{j+2}} \Big[\alpha_s-\beta_s \Big]\\
	 &<  \delta_s = \delta_{s+1}.
	\end{align*}
	This completes the proof.
\end{proof}

\begin{claim}\label{cl:3}
	$\alpha$ is non-computable.
\end{claim}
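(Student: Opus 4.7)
The plan is to show that $\alpha \neq \theta^i$ for every index $i$; since $\alpha$ is left-c.e., this forces $\alpha$ to be non-computable. I fix $i$ and argue by contradiction, assuming $\alpha = \theta^i$. The central observation is that the only two places in the construction where $\alpha_{s+1}$ is set strictly above $\theta^i_{s+1}$ (respectively above $\theta^i_s$) are case 1 of \textbf{follow$(i)$} and case 2 of any \textbf{follow} stage whose chosen index equals $i$; either event would force $\alpha \geq \alpha_{s+1} > \theta^i$, contradicting our assumption, so under $\alpha = \theta^i$ neither ever occurs. More generally, each time case 2 triggers with some index $j'$ we set $\alpha_{s+1} > \theta^{j'}_s$, and since $\theta^{j'}_{s'} \leq \theta^{j'}_s$ and $\alpha_{s'} \geq \alpha_{s+1}$ for $s' > s$, this forces $\theta^{j'}_{s'} < \alpha_{s'}$ thereafter; consequently each $j'$ can be the triggering index of case 2 at most once.

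I then split into two subcases depending on whether there are infinitely many wait stages. In the first subcase, since $0 \leq \theta^i_s - \alpha_s \to 0$, at every sufficiently late wait stage $s+1$ the pair $j = i$ satisfies the transition conditions $\theta^i_{s+1} \geq \alpha_{s+1}$ and $\theta^i_{s+1} - \alpha_{s+1} < 2^{-i}$, so we transition to \textbf{follow$(j)$} for some $j \leq i$. Each such $j$ supports at most one complete follow-episode, since any exit from \textbf{follow$(j)$} proceeds via case 1 of \textbf{follow$(j)$} or via case 2 with triggering index $j$, both of which handle $j$ and block re-entry (the entry condition $\theta^j_{s+1} \geq \alpha_{s+1}$ subsequently fails). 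Only finitely many such transitions are therefore possible, contradicting the fact that infinitely many late wait stages would each transition.

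In the second subcase, the construction is permanently trapped in some \textbf{follow$(j^*)$} from stage $s_0$ onward. By Claim~\ref{cl:2}(3) together with Claim~\ref{cl:4}, $\delta_s - (\alpha_s - \beta_s)$ stays bounded below by a positive constant for $s \geq s_0$, while $\theta^i_s - \alpha_s \to 0$; so the case-2 inequality holds for $j = i$ at all large $s$. By the ``at most once'' observation, there is a stage $s^* \geq s_0$ past which case 2 no longer triggers with any $j' < i$. The main subtlety is then to show that $j = i$ is in fact the \emph{least} index satisfying the case-2 condition at some stage past $s^*$: if some $j' < i$ satisfied it, then the least such $j'' \leq j'$ would still be less than $i$ and would cause a case-2 triggering past $s^*$, contradicting the choice of $s^*$. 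Therefore case 2 eventually triggers with $j = i$, yielding $\alpha > \theta^i$ and the desired contradiction. The main obstacle is this lowest-index bookkeeping in the second subcase; once it is handled, both subcases close and $\alpha \neq \theta^i$.
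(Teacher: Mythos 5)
Your proof is correct and follows essentially the same route as the paper's: force the case-2 condition to hold for $i$ at a late stage of a permanent \textbf{follow} phase (using Claims \ref{cl:2}(3) and \ref{cl:4} for the positive lower bound on $\delta_s-(\alpha_s-\beta_s)$), argue $i$ is the least qualifying index, and derive $\alpha_{s+1}>\theta^i_s\geq\alpha$. You merely make explicit two steps the paper compresses — why the construction is eventually trapped in some \textbf{follow$(j)$} with $j\leq i$, and the least-index bookkeeping via the ``case 2 fires at most once per index'' observation — so this is the same argument, slightly more detailed.
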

\begin{proof}
	If $\alpha$ was computable, then it would be equal to a right-c.e.\ real $\theta^i$. For all stages $s$, $\alpha \leq \theta^i_s$. Let $t$ be a stage such that $\theta^i_t - \alpha_t < \frac{1}{2^i}$. Increasing $t$, we may assume that there is $j \leq i$ such that we are in \textbf{follow$(j)$} from stage $t$ on. Increasing $t$ further, we can assume that for each $i' < i$, if $\theta^i < \alpha$, then we have seen this by stage $t$. Consider the inequality
	\[ \theta^i_s - \alpha_s < \frac{1}{2^{i+2}} \Big[ \delta_s - (\alpha_s-\beta_s) \Big]. \]
	By (3) of Claim \ref{cl:2}, the right-hand-side has a lower bound, and this lower bound is strictly positive by Claim \ref{cl:4}. Since $\theta^i = \alpha$, there is a stage $s \geq t$ where this inequality holds. Then by choice of $t$, $i$ is the least value satisfying this inequality and we set $\alpha_{s+1} > \theta^i_s$.
\end{proof}

\begin{claim}\label{cl:5}
	If there are infinitely many \textbf{wait} stages, then $\delta = \alpha - \beta$.
\end{claim}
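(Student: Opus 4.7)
The plan is to verify the two inequalities $\delta \ge \alpha - \beta$ and $\delta \le \alpha - \beta$ separately, using Claim~\ref{cl:4} together with the definition of $\delta_{s+1}$ at \textbf{wait} stages.

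First I would observe that the sequence $(\delta_s)$ is non-increasing: at every \textbf{follow} stage we set $\delta_{s+1}=\delta_s$, while at every \textbf{wait} stage we set $\delta_{s+1}=\min(\delta_s,\alpha_{s+1}-\beta_{s+1}+2^{-s})\le\delta_s$. Since by Claim~\ref{cl:4} we always have $\delta_s>\alpha_s-\beta_s\ge \alpha_0-\beta$, the sequence $(\delta_s)$ is bounded below, so $\delta=\lim_s\delta_s$ exists. Passing to the limit in the inequality of Claim~\ref{cl:4}, using that $\alpha_s\to\alpha$ by Claim~\ref{cl:1} and $\beta_s\to\beta$ by left-c.e.\ convergence, gives $\delta\ge\alpha-\beta$ immediately.

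For the reverse inequality I would exploit the hypothesis that infinitely many stages are \textbf{wait} stages. Pick an increasing sequence $s_0<s_1<\cdots$ such that $s_k+1$ is a \textbf{wait} stage for every $k$. By the \textbf{wait}-case definition,
\[ \delta_{s_k+1}\;\le\;\alpha_{s_k+1}-\beta_{s_k+1}+2^{-s_k}. \]
Letting $k\to\infty$, the left-hand side converges to $\delta$ and the right-hand side converges to $\alpha-\beta$, so $\delta\le\alpha-\beta$. Combined with the first inequality, this yields $\delta=\alpha-\beta$, as required.

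There is no real obstacle here; the argument is a direct bookkeeping that reads off the conclusion from the recursive definition of $\delta_s$ and the invariant of Claim~\ref{cl:4}. The only mild point to check carefully is that $(\delta_s)$ actually converges, which is guaranteed by monotonicity together with the lower bound above.
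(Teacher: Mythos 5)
Your proof is correct and follows essentially the same route as the paper: both arguments sandwich $\delta_s$ at \textbf{wait} stages between the lower bound from Claim~\ref{cl:4} and the upper bound $\alpha_s-\beta_s+2^{-(s-1)}$ coming from the $\min$ in the \textbf{wait}-stage definition, then pass to the limit along the infinitely many \textbf{wait} stages. Your explicit check that $(\delta_s)$ converges is a reasonable (if routine) addition that the paper leaves implicit.
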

\begin{proof}
	Using Claim \ref{cl:4}, for each \textbf{wait} stage $s$, we have
	\[ \alpha_s - \beta_s \leq \delta_s \leq \alpha_s - \beta_s + \frac{1}{2^{s-1}}. \]
	Thus $\delta = \alpha - \beta$.
\end{proof}

We are now ready to complete the proof. It follows from Claim \ref{cl:1} that $\alpha$ is a left-c.e.\ real that comes to a limit, and by Claim \ref{cl:3}, $\alpha$ is a non-computable. If there are infinitely many \textbf{wait} stages, then by Claim \ref{cl:5} $\delta = \alpha - \beta$ is right-c.e. The other option is that there is $j$ such that every stage from some point on is a \textbf{follow$(j)$} stage. In this case, by (2) of Claim \ref{cl:2}, $\alpha - \beta$ is left-c.e.
\end{proof}

We now turn to Theorem \ref{thm:unif-diff-mach} which says that one can uniformly construct, from an optimal (respectively universal) machine $U$, an optimal (respectively universal) machine $V$ such that $\Omega_{U}-\Omega_{V}$ is neither left-c.e.\ nor right-c.e. We first prove this for optimal machines, and then obtain the result for universal machines as a corollary.

\begin{theorem}
	Theorem~\ref{thm:bl2} is uniform, in the sense that there is a total computable function~$f$ such that if $U=M_e$ is an optimal machine, then $V=M_{f(e)}$ is optimal and $\Omega_{U}-\Omega_{V}$ is neither left-c.e.\ nor right-c.e.
\end{theorem}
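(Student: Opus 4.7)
The plan is to anchor $V$'s optimality to that of $U$ structurally, and then use an auxiliary ``extra'' weight to spoil $\Omega_U - \Omega_V$. Specifically, set $V(0\tau) := U(\tau)$ for every $\tau \in 2^{<\omega}$. This is prefix-free since $U$ is, and the inequality $K_V(x) \le K_U(x) + 1$ shows that $V$ is optimal whenever $U$ is; moreover, if $U$ is universal by adjunction with prefix family $\sigma_i$, then $V$ is universal by adjunction with prefix family $0\sigma_i$, which is still computable and prefix-free, handling the parenthetical case for free. The strings in $\dom(V)$ beginning with $0$ contribute weight $\tfrac12\Omega_U$. The remaining freedom lies in what $V$ does on strings beginning with $1$; I will use it to add a left-c.e.\ weight $\eta \leq 1/2$ via the Kraft--Chaitin theorem, giving $\Omega_V = \tfrac12\Omega_U + \eta$ and hence $\Omega_U - \Omega_V = \tfrac12\Omega_U - \eta$.

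It therefore suffices to construct, uniformly in the index $e$ of $U$, a left-c.e.\ real $\eta \in [0,1/2]$ such that $\tfrac12\Omega_U - \eta$ is neither left-c.e.\ nor right-c.e. Note that $\alpha := \tfrac12\Omega_U$ is a left-c.e.\ Martin-L\"of random real, and a left-c.e.\ index for $\alpha$ is computable from $e$. Thus the task is precisely to run the random-case construction of Barmpalias and Lewis-Pye (Theorem~\ref{thm:bl1}), which the authors have already observed is uniform in~$\alpha$. Concretely, one builds $\eta$ in stages, meeting, for every left-c.e.\ real $\beta^i$ and every right-c.e.\ real $\theta^j$, the diagonalization requirements $\tfrac12\Omega_U - \eta \neq \beta^i$ and $\tfrac12\Omega_U - \eta \neq \theta^j$. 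The Solovay-maximality of $\alpha$ (given by Theorem~\ref{thm:equiv-omega}) is what lets each requirement be satisfied: whenever a candidate $\beta^i$ or $\theta^j$ threatens to approach $\alpha - \eta$, the approximation to $\alpha$ from below outpaces it, creating slack that we use to push $\eta$ up (or hold it fixed) and move $\alpha - \eta$ away from the candidate. Once $\eta$ has been built as a left-c.e.\ real bounded by $1/2$, a uniform Kraft--Chaitin construction gives a prefix-free set of strings beginning with $1$, of total weight~$\eta$, which we declare to be the $1$-half of $\dom(V)$.

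The main conceptual obstacle is that Theorem~\ref{no-uniform-machine-constr} rules out the naive strategy of first choosing a random left-c.e.\ real $\beta$ with $\Omega_U - \beta$ neither left-c.e.\ nor right-c.e., and then uniformly realizing $\beta$ as $\Omega_V$ for some optimal~$V$. The ``copy $U$ on the $0$-side'' trick circumvents this entirely: $V$'s optimality is hardwired by the embedding of~$U$, so we only need to manipulate the extra weight on the $1$-side, which is \emph{any} left-c.e.\ real $\le 1/2$ and can therefore be supplied by Kraft--Chaitin. The remaining pieces (extracting a left-c.e.\ index for $\tfrac12\Omega_U$ from $e$, the uniform random-case argument of Theorem~\ref{thm:bl1}, and the uniform Kraft--Chaitin theorem) are all already known to be uniform, so they compose to produce the desired total computable function~$f$.
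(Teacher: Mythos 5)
Your structural reduction is attractive and genuinely different from the paper's argument: by setting $V(0\tau)=U(\tau)$ you hardwire $K_V\le K_U+1$, so optimality of $V$ is free, and the problem becomes realizing a single left-c.e.\ weight $\eta$ on the $1$-side via Kraft--Chaitin --- which, unlike realizing a prescribed random real as the halting probability of an \emph{optimal} machine, is uniform. This correctly identifies and circumvents the obstacle posed by Theorem~\ref{no-uniform-machine-constr}. However, the proof has a genuine gap at its core: you have reduced the theorem to ``uniformly in $e$, produce a left-c.e.\ $\eta$ with $\tfrac12\Omega_U-\eta$ neither left-c.e.\ nor right-c.e.,'' and that step is not actually carried out. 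The known random-case argument (sketched in our proof: take $\gamma,\delta$ Solovay-incomparable and aim for $\eta=\tfrac12\Omega_U-\delta+\gamma$, so that $\tfrac12\Omega_U-\eta=\delta-\gamma$) requires exhibiting a left-c.e.\ \emph{approximation} to $\tfrac12\Omega_U-\delta$. That in turn rests on an effective form of the Solovay-completeness (equivalently, a randomness deficiency bound) of $\Omega_U$, and such a constant is computable from $e$ only via the optimality constant of $M_e$, which is not computable from $e$ for an arbitrary optimal machine. Your alternative sketch --- a direct diagonalization meeting $\tfrac12\Omega_U-\eta\neq\beta^i,\theta^j$ where ``the approximation to $\alpha$ from below outpaces'' the opponents --- is precisely the hard combinatorial content of Barmpalias--Lewis-Pye's theorem and cannot be dismissed in one sentence; note also that it is not the construction our introduction refers to. A secondary, more easily repaired issue is that Kraft--Chaitin on the $1$-side requires $0\le\eta\le\tfrac12$, which you assert but do not arrange (and one cannot simply rescale $\eta$, since adding or scaling left-c.e.\ reals can destroy the ``neither left-c.e.\ nor right-c.e.''\ property).

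For comparison, our proof builds $V$ by copying $U$'s descriptions at the \emph{same} length (so the $\Omega_U$ contribution cancels exactly and never needs to be matched by a left-c.e.\ approximation), injects $\gamma$ through an auxiliary machine $Q$ attached to the first $U$-description, and injects $\delta=\sum_n 2^{-h(n)}$ by shaving $2^{-h(\tau)}$ off one description of each sufficiently compressible $\tau$. The only fact about $U$'s optimality that is used is qualitative: since $\delta$ is not random, $h-K\rightarrow\infty$, hence $h-K_U\rightarrow\infty$ for \emph{any} optimal $U$, so the set $A$ of affected strings is cofinite and the $\delta$-contribution is $\delta$ minus a dyadic rational. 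No constant associated with $U$ ever needs to be computed, which is exactly what makes the construction uniform. If you can supply a genuinely uniform construction of $\eta$ from $e$ (for instance by proving an effective version of ``$\Omega_U-\delta$ is left-c.e.''\ that does not pass through the optimality constant), your reduction would yield a complete alternative proof; as written, it defers the entire difficulty to an unproved uniformity claim.
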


\begin{proof}
	Let $\gamma, \delta$ be two Solovay-incomparable left-c.e.\ reals. As explained in  \cite{BarmpaliasLewis}, if $\alpha$ is random, then $\beta = \alpha + \gamma - \delta$ is left-c.e.\ and random, and $\alpha-\beta$ is neither left-c.e.\ nor right-c.e.\ Our goal is to make this idea effective.
	
	Let us first express $\delta$ as the sum $\sum_n 2^{-h(n)}$ where $h$ is a computable function. In what follows, when we write $h(\sigma)$ for a string $\sigma$, we mean $h(n)$ where $n$ is the integer associated to $\sigma$ via a fixed computable bijection. Furthermore, let $Q$ be a machine such that $\mu(\dom(Q))=\gamma$.
	
	We build a machine~$V$ from a machine~$U$ as follows. First, we wait for $U$ to issue a description $U(\sigma_0)=\tau_0$. When this happens, $V$ issues a description $V(\sigma_0 0)=\tau_0$ and countably many descriptions by setting $V(\sigma_0 1 p)=Q(p)$ for every $p \in \dom(Q)$.
	
	Now, for every string $\tau \not= \tau_0$ in parallel, we enumerate all descriptions ${U(\sigma)=\tau}$. As long as the enumerated descriptions are such that $|\sigma| \geq h(\tau)$, $V$ copies these descriptions. If at some point we find a description $U(\sigma) = \tau$ with $|\sigma| \leq h(\tau)-1$, we then issue descriptions $V(\sigma0)=\tau$, and $V(\sigma')=\tau$ for every $\sigma'$ of length $h(\tau)$ which extends $\sigma 1$, except for $\sigma'=\sigma 1^{h(\tau)-|\sigma|}$, for which we leave $V(\sigma')$ undefined. After having done that, $V$ copies all further $U$-descriptions of~$\tau$, regardless of the length of these descriptions.
	
	By construction, $V$ is prefix-free, because any $U$-description $U(\sigma)=\tau$ is replaced in $V$ by a set of descriptions $V(\sigma')=\tau'$ where the $\sigma'$ form a prefix-free set of extensions of~$\sigma$. Moreover, $V$ is optimal because by construction, whenever a description $U(\sigma)=\tau$ is enumerated, a $V$-description of $\tau$ of length at most $|\sigma|+1$ is issued. Let us now evaluate $\Omega_U-\Omega_V$. The very first description $U(\sigma_0)=\tau_0$ of $U$ gives rise to descriptions in~$V$ of total measure $2^{-c-1}+2^{-c-1}\mu(\dom(Q))$, where $c=|\sigma_0|$. Thus this part of the construction contributes to $\Omega_U-\Omega_V$ by an amount $2^{-c}-2^{-c-1}-2^{-c-1}\mu(\dom(Q))=2^{-c-1}-2^{-c-1}\gamma$.
	
	Now, for other strings $\tau \not= \tau_0$, there are two cases. Either a description $U(\sigma)= \tau$ with $|\sigma|<h(\tau)$ is found (which is equivalent to saying that $K_U(\tau) < h(\tau)$), or no such description is found. Let $A$ be the set of $\tau$ for which such a description is found. For $\tau \notin A$, all $U$-descriptions of $\tau$ are copied identically in~$V$. For $\tau \in A$, all $U$-descriptions of~$\tau$ are copied except one description $U(\sigma)=\tau$ (thus of measure $2^{-|\sigma|}$) which is mimicked in $V$ by a set of descriptions of measure $2^{-|\sigma|}-2^{-h(\tau)}$.
	
	Putting it all together:
	\[
	\Omega_U-\Omega_V = 2^{-c-1}-2^{-c-1}\gamma + \sum_{\tau \in A} 2^{-h(\tau)}
	\]
	
	To finish the proof, we appeal to the theory of Solovay functions. When $h$ is a computable positive function, the sum $\sum_n 2^{-h(n)}$ is \emph{not} random if and only if $h(n) - K(n) \rightarrow \infty$~\cite{BienvenuD2009,BienvenuDNM2015}. This is the case here as $\delta=\sum_n 2^{-h(n)}$ is Solovay-incomplete hence not random. Suppose that the machine $U$ is indeed an optimal machine. Then $K_U = K +O(1)$, and thus we have $h(n) - K_U(n) \rightarrow \infty$. In particular, for almost all $n$, $h(n) > K_U(n)$. This shows that the set $A$ above is cofinite and therefore that $\sum_{\tau \in A} 2^{-h(\tau)}=\delta-q$ for some (dyadic) rational~$q$. Plugging this in the above equality, we get
	\[
	\Omega_U-\Omega_V = 2^{-c-1}-2^{-c-1}\gamma + \delta - q
	\]
	Since $\gamma$ and $\delta$ are Solovay-incomparable, this shows that $\Omega_U-\Omega_V$ is neither left-c.e.\ nor right-c.e.
\end{proof}

\begin{corollary}
	There is a total computable function~$g$ such that if $U=M_e$ is a universal machine, then $W=M_{g(e)}$ is universal and $\Omega_{U}-\Omega_{W}$ is neither left-c.e.\ nor right-c.e.
\end{corollary}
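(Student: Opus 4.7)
The plan is to derive the corollary from the preceding theorem by a simple ``tagging'' construction. Given $U = M_e$ universal by adjunction, let $V = M_{f(e)}$ be the optimal machine produced by the preceding theorem (so that $\Omega_U - \Omega_V$ is neither left-c.e.\ nor right-c.e.), and define a new machine $W$ by
\[ W(0\tau) = V(\tau) \text{ for } \tau \in \dom(V), \qquad W(1\tau) = U(\tau) \text{ for } \tau \in \dom(U). \]
Since $V$ and $U$ are given uniformly in $e$, an index for $W$ is computable from~$e$; this defines the desired total computable function~$g$.

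First, $W$ is prefix-free: its domain splits as $0\cdot\dom(V) \sqcup 1\cdot\dom(U)$, each branch is prefix-free because $V$ and $U$ are, and the two branches are mutually incomparable. Next, I claim $W$ is universal by adjunction. Because $U$ itself is universal by adjunction, there is a computable function $i \mapsto \sigma_i$ with prefix-free range such that $U(\sigma_i \rho) = M_i(\rho)$ for every $i$ and every $\rho \in \dom(M_i)$. Then $W(1\sigma_i \rho) = U(\sigma_i\rho) = M_i(\rho)$, and $i \mapsto 1\sigma_i$ is again a computable function with prefix-free range, so $W$ is universal by adjunction.

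The halting probabilities satisfy
\[ \Omega_W \;=\; \sum_{\tau \in \dom(V)} 2^{-|\tau|-1} \;+\; \sum_{\tau \in \dom(U)} 2^{-|\tau|-1} \;=\; \tfrac{1}{2}\Omega_V + \tfrac{1}{2}\Omega_U, \]
so $\Omega_U - \Omega_W = \tfrac{1}{2}(\Omega_U - \Omega_V)$. By the preceding theorem $\Omega_U - \Omega_V$ is neither left-c.e.\ nor right-c.e., and this property is preserved under multiplication by a positive rational, so the same holds for $\Omega_U - \Omega_W$. There is no serious obstacle here: the only content is that the universal-by-adjunction property is preserved under prepending a fresh bit to attach an arbitrary prefix-free machine on the complementary branch, which rescales the difference of halting probabilities by a harmless rational factor.
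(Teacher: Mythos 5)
Your proposal is correct and follows essentially the same route as the paper: feed $U$ into the optimal-machine theorem to get $V$, splice $U$ and $V$ onto complementary one-bit branches so that $\Omega_W = \tfrac{1}{2}\Omega_U + \tfrac{1}{2}\Omega_V$, and observe that halving preserves being neither left-c.e.\ nor right-c.e. You are slightly more explicit than the paper in verifying that $W$ inherits universality by adjunction via $i \mapsto 1\sigma_i$, but the argument is the same.
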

\begin{proof}
	Given $U = M_e$, construct $V = M_{f(e)}$ as in the previous theorem. Define a machine $W = M_{g(e)}$ by setting $W(0 \sigma) = U(0 \sigma)$ and $W(1 \sigma) = V(1 \sigma)$. Then $\Omega_W = \frac{1}{2} \Omega_U + \frac{1}{2} \Omega_V$, and so $\Omega_U - \Omega_W = \frac{1}{2}(\Omega_U - \Omega_V)$. Thus if $U$ is universal, then so is $W$, and $\Omega_U - \Omega_W$ is neither left-c.e.\ nor right-c.e.
\end{proof}

\bibliography{References}
\bibliographystyle{alpha}

\end{document}